\renewcommand\footnotetextcopyrightpermission[1]{}
\pgfplotsset{every tick label/.append style={font=\scriptsize}}
\newcommand\numberthis{\addtocounter{equation}{1}\tag{\theequation}}
\newcommand{\myindent}{\noindent\ \hskip\dimen\@ne}
\theoremstyle{plain}
\newtheorem{thm}{Theorem}
\newtheorem{lem}{Lemma}
\newtheorem{cor}{Corollary}
\newtheorem{assumption}{Assumption}
\theoremstyle{remark}
\newtheorem{remark}{Remark}
\newcommand{\E}{\mathbb{E}}
\newcommand{\R}{\mathbb{R}}
\newcommand{\N}{\mathbb{N}}
\newcommand{\cS}{\mathcal{S}} 
\newcommand{\cA}{\mathcal{A}}
\newcommand{\cP}{\mathcal{P}}
\renewcommand{\epsilon}{\varepsilon}
\renewcommand{\phi}{\varphi}
\def\BibTeX{{\rm B\kern-.05em{\sc i\kern-.025em b}\kern-.08em
    T\kern-.1667em\lower.7ex\hbox{E}\kern-.125emX}}
\def\BibTeX{{\rm B\kern-.05em{\sc i\kern-.025em b}\kern-.08em
    T\kern-.1667em\lower.7ex\hbox{E}\kern-.125emX}}    
\theoremstyle{plain}
\theoremstyle{definition}
\theoremstyle{remark}
\DeclareMathOperator*{\argmax}{arg\,max}
\DeclareMathOperator*{\argmin}{arg\,min}
\begin{document}

\title[Structured Reinforcement Learning for Media Streaming at the Wireless Edge]{Structured Reinforcement Learning for Media Streaming\\ at the Wireless Edge}


\author{Archana Bura$^+$, Sarat Chandra Bobbili$^*$, Shreyas Rameshkumar$^*$, Desik Rengarajan$^*$,
Dileep Kalathil$^*$, Srinivas Shakkottai$^*$}
\affiliation{%
  \institution{$^+$ University of California at San Diego, $^*$Texas A\&M University}
  \country{United States of America}
}
\email{{abura@ucsd.edu},{ saratb, dileep.kalathil, sshakkot}@tamu.edu,{shreyasr1097, desik.29}@gmail.com
}

\renewcommand{\shortauthors}{Bura.et al.}



\begin{abstract}
Media streaming is the dominant application over wireless edge 
(access) networks. The increasing softwarization of such networks has led to efforts at intelligent control, wherein application-specific actions may be dynamically taken to enhance the user experience. The goal of this work is to develop and demonstrate learning-based policies for optimal decision making to determine which clients to dynamically prioritize in a video streaming setting. We formulate the policy design question as a constrained Markov decision problem (CMDP), and observe that by using a Lagrangian relaxation we can decompose it into single-client problems. Further, the optimal policy takes a threshold form in the video buffer length, which enables us to design an efficient constrained reinforcement learning (CRL) algorithm to learn it. Specifically, we show that a natural policy gradient (NPG) based algorithm that is derived using the structure of our problem converges to the globally optimal policy.  We then develop a simulation environment for training, and a real-world intelligent controller attached to a WiFi access point for evaluation. We empirically show that the structured learning approach enables fast learning.  Furthermore, such a structured policy can be easily deployed due to low computational complexity, leading to policy execution taking only about 15$\mu$s.  Using YouTube streaming experiments in a resource constrained scenario, we demonstrate that the CRL approach can increase quality of experience (QOE) by over 30\%.
\end{abstract}

\maketitle

\section{Introduction}
\label{section: introduction}

Video streaming has grown in recent years to occupy about 70\% of downstream mobile data traffic, with YouTube being the most popular~\cite{sandvine23}.  At the same time, increasing softwarization is taking place in both cellular and WiFi domains, allowing fine grained dynamic control of wireless resource usage to support mobile applications via intelligent control~\cite{onf-ric,sunny2017generic}.  Indeed, intelligent control has been shown to enable dynamic priorities over the wireless edge (radio access network) for specific applications that require it~\cite{ko2023edgeric}. 

The growing interest in intelligent control of wireless edge resources raises the question as to whether dynamic control policies for prioritized spectrum access can be used to obtain appreciable benefits in terms of quality of experience (QoE) at the user end?   The answer to this question is nuanced, since such a system would be viable only if the policy can first be computed in some straightforward manner, and it is both simple to implement and robust enough that it can be utilized in a variety of scenarios with temporal variations in the number of clients and their channel qualities.   

The goal of this work is to explore the problem of designing policies for dynamic resource allocation at the wireless edge in the specific context of media streaming to mobile devices.  A diagram illustrating this use case appears in Figure~\ref{fig:overview}, wherein several YouTube sessions are simultaneously supported by a wireless access point.  Each YouTube session has an application state in terms of the video packets buffered and the number of stalls experienced thus far, while the quality of the wireless channel is part of the state of the host mobile device.  This state is communicated back to an intelligent controller, which provides a policy for prioritization of selected YouTube sessions.  The impact of such a policy in terms of QoE is measured at the end-user and communicated back to the controller, thus completing the feedback loop.  Can we show appreciable benefits of intelligent control in this scenario?

\begin{figure}[htbp]
\begin{center}
\includegraphics[width=3in]{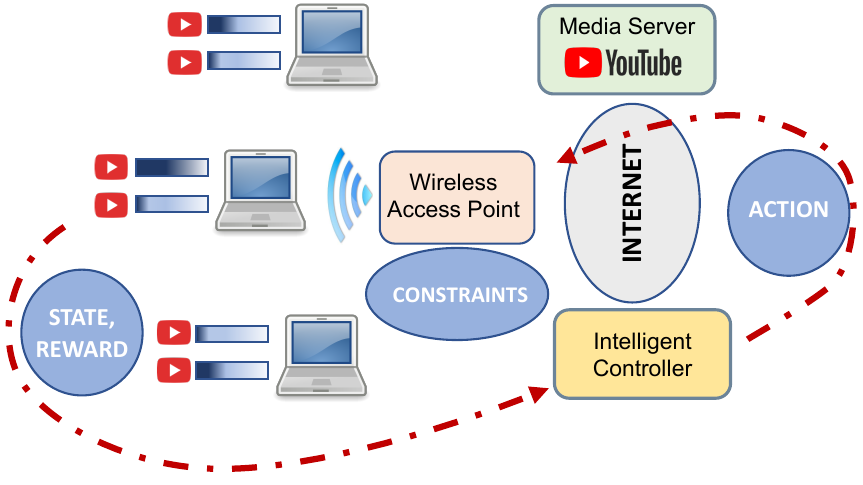}
\caption{Feedback loop in a media streaming application.  The states of the YouTube sessions and channel qualities are communicated to an intelligent controller that determines the service class for each session, accounting for resource constraints.  The impact of this decision on the end-user QoE (reward) is communicated back to the controller.
}
\label{fig:overview} 
\end{center}
\end{figure}
The feedback loop shown in Figure~\ref{fig:overview} suggests that the problem of deciding how best to allocate resources to clients in order to maximize the QoE across all of them could be posed as a Markov Decision Process (MDP).  The resource allocation problem could take the form of assigning resource blocks to each end-user over sub-frames in a cellular setting, or assigning YouTube sessions to queues of different priority levels in a WiFi setting.  In either case, there are several different service classes, and since resources are limited, there is a constraint across allowable allocations of YouTube sessions to service classes. Thus, we have a constrained MDP (CMDP) as our problem formulation.

Solving such a CMDP is not easy due to the unknown statistics of the many sources of randomness in the system, including interactions between channel quality, the transport protocols used, and the video playout rate.  This suggests that while perhaps the CMDP might give insight into the nature of the optimal policy, the actual policy to be employed needs to be determined via a data-driven approach using constrained reinforcement learning (CRL).  However, RL is a very data hungry approach, and reducing the learning duration is critical if we are to deploy the approach over many different use cases.  Thus, our objective is to first discover the structure of the optimal policy, and use this structure to expedite learning for the CRL process.  Ultimately, our goal is to determine a sufficiently robust policy that will generalize to a variety of scenarios and is deployable in a real-world edge network.

\subsection{Main Results}

We first decompose the centralized CMDP problem across all clients into decentralized single-client problems using Lagrangian relaxation techniques, wherein a Lagrange multiplier acts as a price signal to enforce the constraint.  As in several other queueing problem contexts, we find that the optimal policy of such a single-client problem has a threshold structure.  Specifically, for any given stall count and constraint cost, the decision on whether to assign high priority service to a client only depends on whether or not its video buffer is below a fixed threshold. 

We next consider whether we can provably learn the optimal policy.  Here, the threshold structure of the optimal policy enables us to confine our search to learning the threshold parameter using a single neuron 
(soft threshold), as opposed to a complex function that needs to be approximated with a large neural network.  We develop a primal-dual natural policy gradient algorithm, designed for learning such threshold policies under constraints.  We show that the error in both the objective and constraints decay with time $T$ as $O(1/\sqrt{T}).$   Existing results on the convergence of policy gradient only apply to direct parametrization and soft-max parametrization.  Hence, our methodological contribution is to prove the fast convergence of threshold-parametrized natural policy gradient, which applies to a variety of queueing systems.


We next develop a simulation environment to train optimal policies.   We show that the single-client decomposition approach learns at about four times faster than centralized learning while attaining a similar performance. Furthermore, the single neuron structure implies that the inference time for decision making is only about 10 to 15 $\mu$s, as opposed to 50 to 60 $\mu$s taken by an unstructured policy, i.e., the approach is much lower in computational requirements and can be run in a realtime manner.   We also develop a heuristic index-variant that has the advantage of being applicable in spite of a changing number of clients or variability in channel quality.  

Finally, we instantiate a simple intelligent controller platform on which to evaluate the trained policies in a real-world setting.   The controller uses a pub-sub approach to dynamically obtain performance information and to select priority service for particular YouTube clients at a WiFi access point.  
We implement all policies on our intelligent controller and show using over 50 hours of YouTube streaming experiments that we can attain an improvement in QoE of over 30\% as compared to a vanilla policy that does not prioritize any clients, while maintaining a perfect QoE score 60-70\% of the time. We also illustrate the robustness of the index policy to a variety of load and channel conditions while maintaining a higher QoE than other approaches.

\subsection{Related Work}
There is an extensive amount of work on CMDPs focusing on the problems of control~\cite{altman}. In the context of media streaming over wireless channels, several works identify structural properties of the optimal policy~\cite{parandehgheibi2011avoiding,singh2019optimal,hsieh2018heavy} that often take a threshold form.  While \cite{parandehgheibi2011avoiding,singh2019optimal} consider maximizing QoE for a finite system, and \cite{singh2019optimal} utilizes a dual-based decentralization approach, \cite{hsieh2018heavy} focuses on the heavy traffic limit.

Reinforcement learning approaches to find the optimal policies for CMDP problems are  well studied in recent literature. In the context of a tabular setting, recent works focus on characterizing the regret~\cite{liu2021learning,wei2022triple,bura2022dope} or the sample complexity~\cite{hasanzadezonuzy2021learning, hasanzadezonuzy2021model,vaswani2022near} of learning algorithms. There also a number of  works on policy gradient approaches for CMDP \cite{achiam2017constrained, zhang2020first, ding2020npg}. However, these works are for the general CMDPs and they do not exploit the structure of the problem to learn a threshold policy with global convergence guarantees. 

Other work that considers a learning approach to streaming is \cite{roy2021online}, which develops a two time-scale stochastic approximation algorithm \cite{borkar2009stochastic} for an \textit{unconstrained} MDP, and shows certain structural properties of the value function (unimodality), which suggests that a gradient algorithm might converge to the optimal.  However, optimality of the fixed point of gradient descent is not shown.  We take a very different approach via constrained natural policy gradient, and obtain guarantees for fast convergence to the optimal policy.



A variety of systems have been proposed to improve the QoE performance of video streaming. In software defined networks (SDN), assigning the video streaming flows to network links according to path selection algorithms is considered in~\cite{jarschel13sdnyoutube}. VQOA \cite{ramakrishnan15sdnqoe} and QFF \cite{Georgopoulos13openflowqoe} employ SDN to monitor the traffic and adapt the bandwidth assignment of video flows to achieve better streaming performance in the home network. Reinforcement learning has also been applied to a variety of video streaming applications~\cite{mao2017neural,bhattacharyya2019qflow}, which show significant improvements over existing methods.  \cite{mao2017neural} proposes an algorithm Penseive, to train adaptive bit rate (ABR) algorithms to optimize the QoE of the clients.

Finally, \cite{bhattacharyya2019qflow} considers the problem of service prioritization for streaming, and shows empirically that off-the-shelf model-based and model-free RL approaches can result in major improvements in QoE.  While completely heuristic in its approach, the work motivates us to consider the problem of optimal policy design, while exploiting problem structure.   Thus, we provide a simple, structured RL algorithm with provable convergence guarantees, which is empirically able to attain the same performance as more complex deep learning approaches, while only using significantly fewer data samples, and with much less compute needed during runtime.

\section{PROBLEM FORMULATION}
\label{section: Problem Formulation}

We consider the problem of media streaming from a wireless access point (AP) to a set of clients.  The AP is capable of instantiating multiple service classes by allocating (limited) resources across clients.  We will consider a simple setup with exactly two service classes, referred to as ``high'' and ``low'' service classes.  The high service class provides a better service rate than the low priority class.   An intelligent controller periodically takes decisions on the prioritization of clients to maximize the overall QoE at all clients, under the constraint that only a fixed number may be assigned to the high class.

 We consider a discrete time system with $N$ clients that are connected to the AP and compete for resources. At any time $t$, a media server sends some number of media packets to the AP, which in turns forwards them to the appropriate client, which buffers up this content, while continually attempting to play out a smooth stream.  The media playout is said to \emph{stall} when the media buffer is empty, and is an undesirable event. Stall events cause end-user dissatisfaction, and we will utilize a standard model of this disutility in the video streaming context.
 
 \vspace{0.05in}
 \textbf{State:} We denote the state of client $n$ at time $t$ by $s_n(t) = (x_n(t),y_n(t))$, where $x_n(t)$ is the number of packets in the buffer, and $y_n(t)$ is the number of stalls the client $n$ has encountered until time $t$. We consider a finite state buffer for every client, i.e., $ x_n(t)\in[L]\triangleq \{0,1, \dots, L \}.$   We keep track of stalls up to a finite number, consistent with popular QoE models~\cite{7025402,8013810,8247250} that observe that user perception of stalls does not change significantly after a few stalls. Hence, we choose $y_n(t) \in [M] \triangleq \{0,1,\dots,M\}.$  Thus, the state space of client $n,$ denoted by $\mathcal{S}_n$ is  $[L] \times [M]$ and the joint system state space is given by $\mathcal{S} \triangleq \otimes_{n=1}^N \mathcal{S}_n.$  Client state can be expanded to include its channel quality (which evolves independently of controller actions) without changing our analysis.
 
 \vspace{0.05in}
 \textbf{Actions:} The service class assigned to a client $n$ is denoted by action $a_n(t)$, i.e., $a_n(t) \in \mathcal{A}_n \triangleq \{1,2\}$, where $1$ means the high priority service, and $2$ means low priority service. Depending upon the choice of the service class, client $n$ obtains $A_n(t)$ incoming packets into its media buffer at time $t$.  Let $\mu_n(a)$ represent the service rate obtained by the $n^{th}$ client. We assume that the packet arrival process $A_n(t)$ for every $n$ is a Bernoulli distributed random variable, with a parameter $\mu_n(a)$, depending only upon $a$.
\begin{align*}
\mu_n(a) &= \begin{cases}
\mu_{n,1} & a = 1\\
\mu_{n,2} & a = 2.
\end{cases}
\end{align*}
We assume that $\mu_{n,1} > \mu_{n,2}$, for each $n$. This means that when the client is assigned to high priority service, its service rate is higher than that of the low priority service.  The joint system action space is $\mathcal{A} \triangleq \otimes_{n=1}^N \mathcal{A}_n$. Note that we could easily include video bitrate adaptation into the model by simply increasing the action space to multiple dimensions.  This causes no changes to the approach. We now describe the dynamics of the system. 

\vspace{0.05in}
\textbf{Policy:} The joint policy $\Pi$ is a mapping from joint state space $\mathcal{S}$ to joint action space $\mathcal{A}$, $\Pi:\mathcal{S}\times \mathcal{A} \rightarrow [0,1]$. 

\vspace{0.05in}
\textbf{State Transition Structure:}
The system state has two  component-wise transition processes. The first process corresponds to the media buffer evolution, while the second corresponds to the stall count evolution.  We assume that the media playback process of client $n$, denoted by $B_n(t)$ is distributed according to a Bernoulli distribution with mean $\beta_n < \mu_{n,1},$ i.e., stall-free playout is possible if the client is given a high priority service.   Furthermore, a stall event corresponds to a transition from a non-zero media buffer state to a zero media buffer state (the buffer cannot be negative). 

We assume that the media stream that is currently being played by the client may be terminated at any time with probability $\alpha > 0$ by the end-user (i.e., the user simply stops the playout), which causes a reset of the client state to $(0,0).$  Hence, the possible state transitions from $s_n(t)$ to $s_n(t+1)$ are
\begin{align*}
(x_n(t+1),y_n(t+1)) \hspace{2.1in}\\
= \begin{cases}
(x_n(t),y_n(t)), 
\hspace{5mm}\text{w.p}~ (1-\alpha)P_{n,1}(a_n(t)), \\
(\min\{x_n(t) + 1, L\},y_n(t)),
~ \text{w.p}  ~ (1-\alpha)P_{n,2}(a_n(t)), \\
((x_n(t) - 1)^+,\min\{y_n'(t),M\}), 
\text{w.p} ~ (1-\alpha)P_{n,3}(a_n(t)),\\
(0,0), ~\text{w.p}~~~ \alpha,
\end{cases}
\end{align*}
where $x^+ = \max(x,0)$, $y_n'(t) = y_n(t) + \mathbbm{1}\{x_n(t)=1\}$,
$P_{n,1}(a) = 1 -  P_{n,2}(a) - P_{n,3}(a)$,
$P_{n,2}(a) = \mu_n(a) (1-\beta_n)$, $P_{n,3}(a) = (1-\mu_n(a)) \beta_n$, for $a \in \{1,2\}$, and for every $n$.  The state transition probability of the $n$-th client is denoted by $p_n(s'|s,a)$. The starting state distribution of the $n$-th client is denoted by $\rho_n$. Note that we can include client-specific  transition probabilities based on their individual channels without changing the transition structure.

\vspace{0.05in}
\textbf{Notation:}
For the ease of presentation, we summarize the notation used for state transitions throughout the article. Since our state space is discrete and two-dimensional, we assume the following algebraic operations on the state space.
Let $\mathbf{0} \triangleq (0,0)$, $e_x \triangleq (1,0)$ and $e_y \triangleq (0,1)$. For a a given state $s = (x,y) \in \cS$ and $k \in \N_{0}$:
\begin{enumerate}
\item $s + k e_x \triangleq (\min\{x+k,L\},y)$
\item $s - k e_x \triangleq ((x-k)^+,y + \mathbbm{1}_{\{x=k\}}).$
\end{enumerate}
Note that the operations above are not associative. For example, $(s - e_x ) + e_x \neq s + (-e_x + e_x)$. This is due to the fact that the number of stalls can increase in a state transition.

\vspace{0.05in}
\textbf{Reward Structure:}
The client's quality of experience (QoE) depends upon the occurrence of stall events, their duration, and the buffer state. A stall occurs when the media playback stops due to the media buffer becoming empty. Hence, we say that a \emph{stall period} begins at time $t$ if the media buffer becomes empty at $t.$   A simple analytically tractable structure for the instantaneous cost at time $t+1$ takes the form $c(s_n(t),s_n(t+1))$ where the instantaneous media buffer evolution from $x_n(t)$ to $x_n(t+1)$ captures whether a stall is ongoing, and the stall count $y_n(t+1)$ allows us to pick the cost function based on how many stalls have occurred thus far.

Note that we choose to use a cost function, rather than a QoE-based reward function for consistency with a generic MDP/RL approach that focuses on cost minimization.  It is easy to translate the instantaneous cost to an instantaneous reward by simply multiplying by $-1$ and maximizing.  As pointed out above, we can easily include the video bitrate (corresponding to the video resolution) in this cost function but choose not to do so for simplicity.  

We can model a variety of QoE metrics using this approach, such as Delivery Quality Score (DQS) \cite{7025402}, generalized DQS \cite{8013810}, and Time-Varying QoE (TV-QoE) \cite{8247250} or the ITU Standard P.1203 \cite{robitza2017modular}.   The models account for both stall events and video bitrate adaptation and have been validated by laboratory human studies, and we use them to to gain insight into the nature of an appropriate cost function $c(s_n(t),s_n(t+1))$. In all these models, the client starts the video with a QoE of $5$. It decreases when the stall event happens, and further decreases when the stall period continues. The QoE recovers when the stall period ends. The recovery is slow with the increase in the number of stall events the client experiences.

Motivated by these models, we make the following assumptions about the instantaneous cost function $c(\cdot,\cdot)$.
\begin{assumption}\label{assume:CostModel}
$c(\cdot,\cdot)$ satisfies the following conditions: 
\begin{enumerate} 
    \item [\textbf{A1.}] The cost increases with the number of stalls i.e., $c(s, s \pm ke_x ) > c(s',s' \pm ke_x)$ for $s = (x,y)$, $s' = (x,y')$, such that $y > y'$ and $k \in \{0,1\}$ for all $x$. 
    \item [\textbf{A2.}] The cost remains constant during a play period i.e.
    $c(s,s \pm k e_x) = c(s', s' \pm ke_x)$ for $s = (x,y)$, $s' = (x',y)$, such that $k \in \{0,1\}$ for all $x >1,x' >1$.
    \item [\textbf{A3.}] The cost $c(s,0) = c(s',0)$ for all $s,s'$ i.e. the cost of client termination is same for all states. 
\end{enumerate} 
\end{assumption}


\vspace{0.05in}
\textbf{Constrained Markov Decision Process:} Our resource allocation problem takes the form of the constrained infinite horizon discounted optimization

\begin{align*}
&\min_{\pi \in \Pi}    \mathbb{E}^{\pi} \left[\sum_{n=1}^N \sum_{t=0}^{\infty} \gamma^t c(s_n(t),s_n(t+1))\right]\\
&\text{s.t}~~~ \sum_{n=1}^N g(a_n(t)) \leq K, \quad \forall t,
\end{align*}

where $\gamma < 1$ is the discount factor, and $g(a_n(t)) = \mathbbm{1} \{a_n(t) = 1\}$.  Note that the constraint enforces the condition that only $K$ clients may be allowed high priority service at a given time.

\vspace{0.05in}
\textbf{Relaxed CMDP:}  A commonly used approach to solving CMDPs, starting with seminal work by Whittle~\cite{whittle1988restless} is to relax hard constraints and making them hold in expectation to obtain a near-optimal policy.  The hard constraint can then be re-enforced by an indexing approach wherein the actions leading to the highest value are chosen, while subjected to the constraint.  We follow this approach to soften the constraint to obtain the following problem:

\begin{align} \label{eqn:CMDP}
\begin{aligned}
  &\min_{\pi \in \Pi}  \mathbb{E}^{\pi} \left[\sum_{n=1}^N \sum_{t=0}^{\infty} \gamma^t c(s_n(t),s_n(t+1))\right]\\
&\text{s.t}~~~  \mathbb{E}^{\pi} \left[\sum_{n=1}^N \sum_{t=0}^{\infty} \gamma^t g(a_n(t))\right] \leq \bar{K},
\end{aligned}
\end{align}

where, $\bar{K} = K/(1-\gamma)$, $a(t) \sim \Pi(s(t))$, $a(t) = (a_{n}(t))^{N}_{n=1}$, $s(t) = (s_{n}(t))^{N}_{n=1}$.  The constraint is now the discounted total number of times the high priority service is allocated to all the clients.  
Notice that the policy here depends on the joint state of the system as a whole, and a centralized controller is needed to compute the policy, adding to the complexity of the problem. However, under the assumption of strict feasibility of problem ~\eqref{eqn:CMDP} we can find an approximately optimal policy. 

\begin{assumption}[Slater's condition]\label{eqn:slatercond}
    There exists a policy $\bar{\pi} \in \Pi$ such that  $0 < \xi \leq \bar{K} - \mathbb{E}^{\bar{\pi}} [\sum_{n=1}^N \sum_{t=0}^{\infty} \gamma^t g(a_n(t))]$.
\end{assumption}

An immediate consequence of Assumption~\eqref{eqn:slatercond} is that there exists a primal-dual pair $(\pi^*,\lambda^*)$ achieving strong duality.
For the ease of notation, we denote the individual value and discounted constraint functions of client $n$ under policy $\pi$ by,
\begin{align} \label{eqn:MDPsingle}
J_c^{\pi}(\rho_n) &\triangleq \E_{s_n(0) \sim \rho_n} \left[\mathbb{E}^{\pi} \left[ \sum_{t=0}^{\infty} \gamma^t c(s_n(t),s_n(t+1))|s_n(0)\right]\right],\\
J_g^{\pi}(\rho_n) &\triangleq \E_{s_n(0) \sim \rho_n} \left[\mathbb{E}^{\pi} \left[ \sum_{t=0}^{\infty} \gamma^t g(a_n(t))|s_n(0) \right] \right],
\end{align}

Consider the associated max-min formulation of problem \eqref{eqn:CMDP},
\begin{equation}\label{eqn:saddlepoint}
     \max_{\lambda \in \R_+} \min_{\pi \in \Pi} \sum_{n=1}^N J^{\pi}(\rho_n;\lambda)
\end{equation}
where $J^{\pi}(\rho_n;\lambda) \triangleq  J_c^{\pi}(\rho_n) + \lambda (J_g^{\pi}(\rho_n) - \frac{\bar{K}}{N})$,
$\lambda$ is the non-negative dual variable, and the associated dual function is given by $D(\lambda) \triangleq \min_{\pi \in \Pi} \sum_{n=1}^N J^{\pi}(\rho_n;\lambda)$. Let $\lambda^*$ be the optimal dual-variable i.e. $\lambda^* = \argmax_{\lambda \in \R_+} D(\lambda)$, and $\pi^*$ be an optimal solution to problem~\eqref{eqn:CMDP}. $J^{\pi}(\rho_n;\lambda)$ can be interpreted as the Lagrangian value function for a given client $n$ and the equivalence is exact when $N=1$.
\begin{lem}
Given that assumption~\eqref{eqn:slatercond} holds true, then, $D(\lambda^*) = \sum_{n=1}^N J_c^{\pi^*}(\rho_n)$
\end{lem}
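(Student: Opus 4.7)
The plan is to establish strong duality for the relaxed CMDP~\eqref{eqn:CMDP} and then read off the identity $D(\lambda^*) = \sum_{n=1}^N J_c^{\pi^*}(\rho_n)$ as the value of the saddle-point problem~\eqref{eqn:saddlepoint}. The approach I would take has two halves: a weak-duality direction that is essentially bookkeeping, and a strong-duality direction that relies on Slater's condition.

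First I would reformulate the primal in terms of occupancy measures. For any stationary (possibly randomized) policy $\pi$, define the discounted state-action occupancy $d^\pi_n(s,a) = (1-\gamma)\sum_{t=0}^\infty \gamma^t \pr^\pi(s_n(t)=s, a_n(t)=a \mid s_n(0)\sim\rho_n)$. Then $J_c^\pi(\rho_n)$ and $J_g^\pi(\rho_n)$ are linear functions of $d^\pi_n$, and the set of achievable tuples $(d^\pi_1,\dots,d^\pi_N)$ is a polytope (or at least a convex set) cut out by the Bellman flow equations. In this variable the primal problem~\eqref{eqn:CMDP} becomes a linear program: minimize a linear objective over a convex feasible set subject to the single scalar inequality $\sum_n J_g^\pi(\rho_n) \le \bar K$. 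This convex/LP reformulation is the foundation that makes classical Lagrangian duality applicable to the CMDP.

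Next, for weak duality, observe that for any $\lambda \ge 0$ and any feasible $\pi$, $\sum_n J^\pi(\rho_n;\lambda) = \sum_n J_c^\pi(\rho_n) + \lambda\bigl(\sum_n J_g^\pi(\rho_n) - \bar K\bigr) \le \sum_n J_c^\pi(\rho_n)$ because $\lambda \ge 0$ and $\pi$ satisfies the constraint. Minimizing the left-hand side over all $\pi \in \Pi$ (not just feasible ones) only makes it smaller, so $D(\lambda) \le \sum_n J_c^{\pi^*}(\rho_n)$ for every $\lambda \ge 0$, and therefore $D(\lambda^*) \le \sum_n J_c^{\pi^*}(\rho_n)$.

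For the reverse inequality, I would invoke Slater's condition (Assumption~\eqref{eqn:slatercond}): the strictly feasible policy $\bar\pi$ witnesses that the constraint in the LP formulation is satisfied with slack $\xi > 0$. Together with convexity (or linearity) in the occupancy-measure variable, this is precisely the constraint qualification needed for strong duality of the convex program to hold with the dual optimum attained at some finite $\lambda^* \ge 0$. Applying that theorem yields $\max_{\lambda \ge 0} D(\lambda) = \sum_n J_c^{\pi^*}(\rho_n)$, and since $\lambda^* = \argmax_{\lambda \ge 0} D(\lambda)$ by definition, we conclude $D(\lambda^*) = \sum_n J_c^{\pi^*}(\rho_n)$.

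The only subtlety, and the step I expect to require the most care, is justifying that the CMDP is genuinely a convex program in the occupancy-measure variables so that off-the-shelf strong-duality theorems apply; once that is done, Slater's condition does the rest, and weak duality is immediate.
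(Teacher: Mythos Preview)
Your proposal is correct. The paper does not actually supply a proof of this lemma: it simply states the result, prefacing it with the remark that Slater's condition implies the existence of a primal--dual pair achieving strong duality, and implicitly defers to standard CMDP theory (e.g., Altman~\cite{altman}). Your argument---reformulating the CMDP as a linear program in discounted occupancy measures, noting that weak duality is immediate, and then invoking Slater's condition as the constraint qualification that upgrades to strong duality with an attained dual optimum---is exactly the standard justification underlying that one-line appeal. So there is no methodological difference to report; you have filled in the details the paper omits, and the occupancy-measure LP step you flag as the main subtlety is indeed the crux (and is well known to hold for finite-state, finite-action discounted MDPs).
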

However, finding the optimal joint policy even for a fixed $\lambda$ in the dual function can be computationally hard because the complexity increases exponentially with the number of users.
It can be shown that the joint problem can be decomposed in to $N$ smaller problems \cite{singh2019optimal}.

Let $\pi_{n}: \mathcal{S}_n\times \mathcal{A}_n \to [0,1]$ be the policy of client $n$ where the actions $a_n(t)$ at each time step $t$, are taken independently of the state of clients other than $n$. Denoting the optimal policy for the Lagrangian value function of a given client $n$ and $\lambda$ by,

\begin{equation}
    \pi^{*}_{n}(\lambda) \triangleq \argmin_{\pi_{n}} J^{\pi_{n}}(\rho_n;\lambda)
\end{equation}

\begin{thm}~\label{eqn:decentpolthm}
 The joint randomized policy $\pi^* = \otimes \pi_n^*(\lambda^*)$ obtained by individually optimizing each client's Lagrangian value function $J^{\pi}(\rho_n;\lambda)$ for the optimal dual variable $\lambda^*$ is indeed optimal for the original centralized CMDP problem \eqref{eqn:CMDP} i.e., $\sum_{n=1}^N J_c^{\pi^*}(\rho_n) = \sum_{n=1}^N J_c^{\pi_n^*(\lambda^*)}(\rho_n)$
\end{thm}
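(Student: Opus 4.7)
The plan is to combine strong duality for the relaxed CMDP in~\eqref{eqn:CMDP} with the observation that, once the constraint is absorbed into the objective via $\lambda$, the resulting Lagrangian is additively separable across clients and the underlying dynamics are client-wise independent. These two ingredients together let me glue the per-client optimizers $\pi_n^*(\lambda^*)$ into a jointly optimal policy for the original problem.

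First, I would fix an arbitrary $\lambda \geq 0$ and expand
\[
\sum_{n=1}^N J^{\pi}(\rho_n;\lambda) = \sum_{n=1}^N J_c^{\pi}(\rho_n) + \lambda \sum_{n=1}^N J_g^{\pi}(\rho_n) - \lambda \bar{K}.
\]
Because $c(s_n(t),s_n(t+1))$, $g(a_n(t))$, the kernel $p_n(\cdot|\cdot,\cdot)$, and the initial distribution $\rho_n$ all depend only on the coordinates of client $n$, each of $J_c^{\pi}(\rho_n)$ and $J_g^{\pi}(\rho_n)$ is determined by the marginal distribution that $\pi$ induces on client $n$'s trajectory. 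I would then argue, either directly by conditioning on the remaining clients' histories or, more cleanly, by passing to the occupation-measure LP of the joint product MDP, that the minimum over joint policies is attained at a product of stationary per-client policies $\otimes_n \pi_n^*(\lambda)$. Consequently $D(\lambda) = \sum_{n=1}^N J^{\pi_n^*(\lambda)}(\rho_n;\lambda)$.

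Second, by Assumption~\ref{eqn:slatercond} strong duality holds and the preceding lemma gives $D(\lambda^*) = \sum_{n=1}^N J_c^{\pi^*}(\rho_n)$. Plugging in the decomposed form of $D(\lambda^*)$ yields
\[
\sum_{n=1}^N J_c^{\pi^*}(\rho_n) = \sum_{n=1}^N J_c^{\pi_n^*(\lambda^*)}(\rho_n) + \lambda^*\Big(\sum_{n=1}^N J_g^{\pi_n^*(\lambda^*)}(\rho_n) - \bar{K}\Big).
\]
Because $\lambda^*$ maximizes the concave dual $D(\cdot)$ on $\R_+$, standard KKT reasoning forces the decoupled policy $\otimes_n \pi_n^*(\lambda^*)$ to be primal feasible, $\sum_n J_g^{\otimes_n \pi_n^*(\lambda^*)}(\rho_n) \leq \bar{K}$, and to satisfy complementary slackness $\lambda^*\big(\sum_n J_g^{\otimes_n \pi_n^*(\lambda^*)}(\rho_n) - \bar{K}\big) = 0$. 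The displayed identity then collapses to $\sum_n J_c^{\pi_n^*(\lambda^*)}(\rho_n) = \sum_n J_c^{\pi^*}(\rho_n)$, which is the claim.

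I expect the main obstacle to be the decomposition step: a priori $\Pi$ contains joint policies that may correlate actions across clients, and one has to rule out that such correlations ever help. The cleanest way forward is through the occupation-measure LP of the discounted joint MDP, where both the Lagrangian objective and the per-step constraint are linear functionals of the per-client marginal occupation measures, so an optimum is achieved at a product of stationary single-client policies. A smaller technical point is to verify that Slater's condition in Assumption~\ref{eqn:slatercond} transfers appropriately so that the preceding lemma and complementary slackness apply both to the centralized CMDP and to each per-client subproblem.
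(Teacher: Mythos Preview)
Your proposal is correct and follows essentially the same route as the paper: decompose the Lagrangian additively across clients, argue that a product of per-client stationary policies attains the inner minimum, and then invoke strong duality together with complementary slackness at $\lambda^*$ to conclude. The paper's own proof is in fact terser than yours---it states the separability, asserts that per-client stationary randomized policies are optimal for each subproblem, checks complementary slackness, and defers the remaining details to~\cite{singh2019optimal}---so your explicit mention of the occupation-measure LP as the clean way to justify that correlated joint policies never help is a welcome elaboration rather than a departure.
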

The proof is similar to that of \cite[Theorem $1$]{singh2019optimal}, for completeness, we provide the details in Appendix~\ref{sec:theorem-1}.

Note that Theorem~\eqref{eqn:decentpolthm} states that it is sufficient to look at decentralized class of policies of the form $\pi = \otimes \pi_n$. 


\section{Structure of the optimal policy}

We begin our analysis by showing two important properties of the optimal policy and value function. In particular, we  show that the optimal policy has a threshold structure. In the next section, we will exploit these structural properties to show that our proposed RL algorithm will provably converge to the globally optimal policy.

\subsection{Threshold Structure}

We first show that the optimal policy for each client has a simple threshold structure, following an approach similar to~\cite{singh2019optimal}. We present this result as a theorem below. 

\begin{thm}\label{thm:threshold}
The optimal policy $\pi^{*}_{n}$ for any single client $n \in [N]$ has a  threshold structure. More precisely, for a given state $s=(x,y)$, there exists a threshold function $f^{*}(\cdot)$ such that

\begin{align*}
\pi^{*}_{n}(s,a) &= \begin{cases}
\mathbbm{1}\{x \leq f^{*}(y) \},&\quad \textnormal{if}~~ a = 1 , \\
\mathbbm{1}\{x > f^{*}(y) \}, &\quad \textnormal{if}~~ a=2.
\end{cases}
\end{align*}
\end{thm}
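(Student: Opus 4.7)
My plan is to reduce the threshold claim to a monotonicity property of the single-client value function, and then to establish that property by induction over value iteration. By Theorem~\ref{eqn:decentpolthm} it suffices to analyze the single-client Lagrangian MDP at a fixed multiplier $\lambda$. Fix such a client, drop the subscript, and let $V^{*}$ denote the optimal value function of the problem $\min_{\pi} J^{\pi}(\rho;\lambda)$. Define the associated state-action value $Q^{*}(s,a) = \bar c(s,a) + \lambda g(a) + \gamma \sum_{s'} p(s'\mid s,a) V^{*}(s')$, where $\bar c(s,a) = \E[c(s,s')\mid s,a]$. The greedy policy is of the stated threshold form in $x$ (for each fixed $y$) if and only if the advantage $\Delta(s) := Q^{*}(s,1) - Q^{*}(s,2)$ is monotone nondecreasing in $x$ on $\{0,1,\dots,L\}$. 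Granting that, the threshold $f^{*}(y) := \max\{x : \Delta(x,y) \leq 0\}$ (with the convention $f^{*}(y) = -1$ if the set is empty) yields the desired structure.

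Step one is to expand $\Delta(s)$ using the action-dependent transition probabilities $P_{2}(a)=\mu(a)(1-\beta)$ and $P_{3}(a)=(1-\mu(a))\beta$. A direct substitution gives
\begin{align*}
\Delta(s) = \lambda &+ \gamma(1-\alpha)(\mu_{1}-\mu_{2})\bigl[(1-\beta)(V^{*}(s+e_{x})-V^{*}(s)) \\
&- \beta(V^{*}(s-e_{x})-V^{*}(s))\bigr] + \Delta_{c}(s),
\end{align*}
where $\Delta_{c}(s)$ is the analogous one-step expected-cost difference. Assumption A2 makes the per-transition cost constant on the ``play'' interior $x>1$, so $\Delta_{c}(s)=0$ there, and the sign and monotonicity of $\Delta(s)$ on the interior are governed purely by the forward differences of $V^{*}$ in $x$.

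Step two is to show that $V^{*}(\cdot,y)$ has nondecreasing forward differences in $x$ for each fixed $y$, i.e.\ $V^{*}$ is convex in $x$. I would argue by induction on value iteration iterates $V_{k}$, starting with $V_{0}\equiv 0$, and show that the Bellman operator preserves both (a) monotonicity of $V_{k}(x,\cdot)$ in $y$ (driven by Assumption A1, which says cost increases with stalls) and (b) convexity of $V_{k}(\cdot,y)$ in $x$. The transition kernel is, up to the $\alpha$-reset term that is identical for both actions and independent of $x$, a convex combination of the three shifts $s$, $s+e_{x}$, $s-e_{x}$; together with A2/A3 this lets the one-step minimization over $a\in\{1,2\}$ preserve convexity in $x$. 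Passing $k\to\infty$ then gives convexity of $V^{*}$, and substitution into the formula for $\Delta(s)$ yields the required monotonicity on the interior $x>1$.

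The main obstacle I anticipate is the boundary $x\in\{0,1\}$, where a playout at $x=1$ triggers a stall and increments $y$, so $\Delta_{c}(s)$ no longer vanishes and a $V^{*}(\cdot,y+1)$ term enters the Bellman update. Verifying that $\Delta(s)$ remains monotone across the $x=1\to x=0$ interface requires that the stall penalty recorded by A1 dominates the cross-$y$ jump of $V^{*}$ at that boundary; this is the one place where the interplay among A1, A2, and A3 is delicate and where the paper's cost assumptions genuinely earn their keep. Once this boundary case is handled, $f^{*}(y)$ is well defined and the threshold form follows, analogously to the argument in~\cite{singh2019optimal}.
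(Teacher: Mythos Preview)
Your plan is essentially the paper's own argument: your expansion of $\Delta(s)$ is Lemma~\ref{lem:ifflemma1}, and your ``convexity of $V^{*}(\cdot,y)$'' via value-iteration induction is Lemma~\ref{lem:nondcr1}, with the threshold read off exactly as in the proof of Theorem~\ref{thm:threshold}. Two small remarks. First, Assumption~A2 only says the per-transition costs are constant in $x$ on the interior, not equal across transition types, so $\Delta_{c}(s)$ is \emph{constant} there rather than zero; this is harmless for monotonicity but worth stating correctly. Second, your boundary worry at $x\in\{0,1\}$ is exactly what the paper resolves inside the induction for Lemma~\ref{lem:nondcr1} by invoking monotonicity of $J_{k}$ in $y$ (from A1) to get $J_{k}((s-e_{x})+e_{x})\geq J_{k}(s)$; since you already plan to carry monotonicity in $y$ through the induction, the same device closes your boundary case without any additional ``domination'' hypothesis.
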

We will use the following results to prove the above theorem. Since the transition function of all the clients have the same properties (even though they may not be identical), we drop the the client index $n$ for ease of notation, and denote the corresponding optimal value function for state $s$, and parameter $\lambda$ by $J^*(s;\lambda)$.


\begin{lem}\label{lem:nondcr1}
Suppose $s \triangleq (x+1,y)$. Then, $J^*(s;\lambda) - J^*(s-e_x;\lambda)$ is non-decreasing in $x$, for any given $\lambda$, and $y$.
\end{lem}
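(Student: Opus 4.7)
The plan is to prove the claim via value iteration and induction. Define the Bellman operator for the single-client Lagrangian MDP
\begin{equation*}
(TJ)(s;\lambda) \;:=\; \min_{a\in\{1,2\}}\Big[\bar c(s,a) + \lambda g(a) + \gamma \sum_{s'} p(s'\mid s,a)\,J(s';\lambda)\Big],
\end{equation*}
with $\bar c(s,a) := \sum_{s'} p(s'\mid s,a)\,c(s,s')$, and iterate $J_0\equiv 0$, $J_{k+1}:=TJ_k$. Since $T$ is a $\gamma$-contraction in sup-norm, $J_k\to J^*$ uniformly, so it suffices to prove by induction on $k$ that
\begin{equation*}
\Delta_k(x,y) \;:=\; J_k(x+1,y;\lambda) - J_k\bigl(x,\,y+\mathbbm{1}\{x=0\};\lambda\bigr)
\end{equation*}
is non-decreasing in $x\in\{0,\dots,L-1\}$ for every $y$ and $\lambda$, and then pass to the limit. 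The base case $k=0$ is trivial.

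For the inductive step, I decompose $J_{k+1}(s;\lambda) = \min_a Q_{k+1}(s,a;\lambda)$ with $Q_{k+1}(s,a;\lambda) := \bar c(s,a) + \lambda g(a) + \gamma \sum_{s'} p(s'\mid s,a)\,J_k(s';\lambda)$, and first show that each $Q_{k+1}(\cdot,a;\lambda)$ preserves the $\Delta$-property in $x$. This follows because the expected one-step cost is a convex combination of per-transition costs whose monotonicity is controlled by Assumption~1, and the future-value term is a convex combination of $\{-1,0,+1\}$ buffer-shifts of $J_k$ plus the $x$-independent reset to $(0,0)$ with probability $\alpha$; all of these operations respect the inductive hypothesis, with the boundary at $x=L$ handled by reflection and the stall increment at $x=0$ absorbed into the very definition of $\Delta_k$.

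The main obstacle is controlling the min over actions, since the minimum of two discretely-convex functions is not in general discretely-convex. I would handle this by establishing, as a parallel inductive hypothesis, a threshold/single-crossing property for $Q_{k+1}$: the action-gap $Q_{k+1}(s,1;\lambda) - Q_{k+1}(s,2;\lambda)$ is monotone in $x$, giving a threshold $f_k(y)$ with $a=1$ optimal on $\{x\le f_k(y)\}$ and $a=2$ optimal beyond. With the threshold in hand, the first differences of $J_{k+1}$ split into three regimes: inside $\{x<f_k\}$ and $\{x>f_k\}$ monotonicity of $\Delta_{k+1}$ follows from convexity of the single active $Q$-branch on that side, and the crossover inequality at $x=f_k$ follows by combining convexity of both branches with the slope ordering implied by the threshold. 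The most delicate step is the $x=0$ boundary, where Assumption~1 (A1) is needed to control the $J_k(0,y+1;\lambda)$ term appearing in $\Delta_k$, and (A3) ensures the reset term $\alpha\,J_k(0,0;\lambda)$ cancels on both sides of every convexity inequality. Propagating both properties to $J_{k+1}$ and letting $k\to\infty$ completes the proof.
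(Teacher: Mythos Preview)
Your overall strategy---value iteration plus induction on $k$---matches the paper's, and your instinct that the hard step is controlling the $\min$ over actions is correct. But your proposed resolution of that step has a real gap.

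You plan to (i) show each branch $Q_{k+1}(\cdot,a)$ has non-decreasing first differences, (ii) establish a threshold via monotonicity of the action gap, and (iii) handle the crossover ``by combining convexity of both branches with the slope ordering implied by the threshold.'' Step~(iii) does not go through as stated. Monotonicity of the action gap $G(x)=Q_{k+1}((x,y),1)-Q_{k+1}((x,y),2)$ means precisely that branch~$1$ has \emph{larger} first differences than branch~$2$ at every $x$; since the minimum follows branch~$1$ below the threshold and branch~$2$ above it, it switches from the steeper branch to the shallower one---exactly the configuration in which the minimum of two discretely convex functions can fail to be convex (take $Q_1(x)=x^2$, $Q_2(x)=(x-2)^2+1$ on $\{0,1,2,3\}$: the min has first differences $1,0,1$). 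So branch convexity plus single crossing is insufficient; a quantitative cross-branch slope inequality is needed at the crossover, and your sketch does not indicate where it comes from. It must come from the specific transition structure, not from the generic shape properties you invoke.

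The paper sidesteps the case analysis entirely with an interchange argument: let $a_1,a_2$ be the minimizing actions at $s'=(x+2,y)$ and $s-e_x=(x,y)$, and use suboptimality at the middle point, $2J_{k+1}(s)\le J_{k+1}(s,a_1)+J_{k+1}(s,a_2)$. This reduces the claim to showing the cross term $B:=DJ_{k+1}(s,a_2)-DJ_{k+1}(s',a_1)\le 0$, which is verified by expanding in terms of $DJ_k$ and the probabilities $P_1,P_2,P_3$; the key algebraic fact is that $1-P_2(a_2)-P_3(a_1)\ge 0$ for \emph{any} pair $a_1,a_2$, so the inductive hypothesis on $DJ_k$ finishes it uniformly, with no threshold needed. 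In particular, your ``parallel inductive hypothesis'' on the threshold is unnecessary---in the paper the threshold structure (Lemma~\ref{lem:ifflemma1}) is deduced \emph{from} this lemma, not proved alongside it.
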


\begin{lem} \label{lem:ifflemma1}
The optimal action obtained in state $s \triangleq (x,y)$ is $2$, if and only if, 
$(1-\beta) [J^*(s+e_x;\lambda) - J^*(s;\lambda)] + \beta [J^*(s;\lambda) - J^*(s-e_x;\lambda)] \ge r$, where $r = c_0 - \frac{\lambda}{\gamma(1-\alpha) (\mu_1 - \mu_2)} $, and $c_0 = \frac{1}{\gamma}((1-\beta)[c(s,s)-c(s,s+e_x)]+\beta[c(s,s-e_x) - c(s,s)])$.
\end{lem}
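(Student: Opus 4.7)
The plan is to prove the equivalence by computing the difference of the two action-value functions and using the specific algebraic structure of the transition probabilities $P_{n,1}, P_{n,2}, P_{n,3}$. First, I would write down the Bellman optimality equation for the Lagrangian value function of a single client (dropping $n$). For each action $a \in \{1,2\}$, define
\begin{align*}
Q^*(s,a;\lambda) &= \lambda g(a) + \alpha\bigl[c(s,\mathbf{0}) + \gamma J^*(\mathbf{0};\lambda)\bigr] \\
&\quad + (1-\alpha)\Bigl\{P_1(a)\bigl[c(s,s) + \gamma J^*(s;\lambda)\bigr] + P_2(a)\bigl[c(s,s+e_x) + \gamma J^*(s+e_x;\lambda)\bigr] \\
&\qquad + P_3(a)\bigl[c(s,s-e_x) + \gamma J^*(s-e_x;\lambda)\bigr]\Bigr\},
\end{align*}
so that action $2$ is optimal in state $s$ iff $Q^*(s,1;\lambda) - Q^*(s,2;\lambda) \ge 0$.

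Next, I would compute this difference. Two things happen: the $\alpha$-termination term is independent of $a$ and cancels, and the contribution of the Lagrangian regularizer is simply $\lambda$ (since $g(1)-g(2)=1$). For the three transition terms, I would use the key identity $P_1(1) - P_1(2) = -(P_2(1)-P_2(2)) - (P_3(1)-P_3(2))$ to eliminate the ``stay'' term, which folds it into weighted differences against the $\pm e_x$ transitions:
\begin{align*}
&Q^*(s,1;\lambda) - Q^*(s,2;\lambda) = \lambda \\
&+ (1-\alpha)\bigl[P_2(1)-P_2(2)\bigr]\bigl\{c(s,s+e_x)-c(s,s) + \gamma\bigl[J^*(s+e_x;\lambda)-J^*(s;\lambda)\bigr]\bigr\} \\
&+ (1-\alpha)\bigl[P_3(1)-P_3(2)\bigr]\bigl\{c(s,s-e_x)-c(s,s) + \gamma\bigl[J^*(s-e_x;\lambda)-J^*(s;\lambda)\bigr]\bigr\}.
\end{align*}
Substituting $P_2(a) = \mu(a)(1-\beta)$ and $P_3(a)=(1-\mu(a))\beta$ yields $P_2(1)-P_2(2) = (\mu_1-\mu_2)(1-\beta)$ and $P_3(1)-P_3(2) = -(\mu_1-\mu_2)\beta$. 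Dividing through by the strictly positive quantity $\gamma(1-\alpha)(\mu_1-\mu_2)$ and separating the cost terms from the value differences reduces the inequality $Q^*(s,1;\lambda) \ge Q^*(s,2;\lambda)$ to exactly
\[
(1-\beta)\bigl[J^*(s+e_x;\lambda) - J^*(s;\lambda)\bigr] + \beta\bigl[J^*(s;\lambda) - J^*(s-e_x;\lambda)\bigr] \ge c_0 - \tfrac{\lambda}{\gamma(1-\alpha)(\mu_1-\mu_2)} = r,
\]
which is the claim.

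I do not expect a serious obstacle here; the lemma is essentially a rearrangement of the Bellman condition for action $2$ to be no worse than action $1$. The one piece of care is the bookkeeping that collapses the three transition probabilities into the two value-function differences via the $P_1 = 1 - P_2 - P_3$ identity, and the sign handling so that the constant $c_0$ (expressible purely through the cost function via Assumption~\ref{assume:CostModel}) comes out with the correct orientation. Boundary cases ($x=0$ or $x=L$) are absorbed automatically by the paper's conventions $s+e_x \triangleq (\min\{x+1,L\},y)$ and $s-e_x \triangleq ((x-1)^+,y+\mathbbm{1}_{\{x=1\}})$, so no separate argument is needed for them. Lemma~\ref{lem:nondcr1} is not invoked in the proof of this lemma itself, but is what will later turn the pointwise characterization here into the threshold structure asserted in Theorem~\ref{thm:threshold}.
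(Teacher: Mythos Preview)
Your argument is correct and is essentially the same computation the paper carries out: comparing the two branches of the Bellman equation, cancelling the $\alpha$-term, using $P_1=1-P_2-P_3$ to collapse to the two value-function differences, and then dividing by $\gamma(1-\alpha)(\mu_1-\mu_2)$. The only cosmetic difference is that the paper performs this algebra at the level of the $(k+1)$st value-iteration step with $J_k$ (Lemmas~\ref{lem:ifflemma} and~\ref{lem:ifflemma2}) and then passes to the limit, whereas you work directly with the fixed point $J^*$ via the Bellman equation; either route yields the statement immediately.
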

The proofs of Lemma \ref{lem:nondcr1} and Lemma \ref{lem:ifflemma1} are given in Appendix \ref{sec:append-lemma1-2}. 
We now present the proof of Theorem \ref{thm:threshold}.

\begin{proof}[Proof of Theorem \ref{thm:threshold}]
From Lemma~\ref{lem:ifflemma1}, we have that the optimal action in state $s \triangleq (x,y)$ is $2$ iff $(1-\beta) [J^*(s+e_x;\lambda) - J^*(s;\lambda)] + \beta [J^*(s;\lambda) - J^*(s-e_x;\lambda)] \ge r,$ where $r$ is defined as above. Moreover, Lemma~\ref{lem:nondcr1} establishes the non-decreasing property of value function differences, for each stall $y$. Together, these imply that for a given stall count $y$, there exists a buffer level $f^*(y)$ such that if the above inequality is satisfied for the first time when the buffer level $x = f^*(y)$, then this inequality is satisfied for all $x > f^*(y)$, and not satisfied for the buffer levels $x \leq f^*(y)$. Hence, the optimal policy is to take action $1$ when buffer level is below $f^*(y)$, and to take action $2$ when the buffer level is above $f^*(y)$, implying that the optimal policy is of threshold type with threshold   $f^*(y)$.
\end{proof}

\section{RL for Learning the Optimal Threshold}

In this section, we develop a primal-dual algorithm to solve the constrained Markov decision process problem, and provide results for its convergence to the globally optimal policy. Our implementation of the proposed algorithm follows the methodology of actor-critic algorithms with function approximation. To begin with, we slightly modify our hard-threshold policy to a parameterized soft-threshold policy. For the soft-threshold parametrization, the probability of taking action $a$ in state $s$ corresponding to a threshold parameter vector $\theta$, $\pi_{\theta}(s, a)$ is defined as
\begin{align*}
\pi_{\theta}(s, a) &= \begin{cases}
1 - \frac{1}{1+e^{\theta(s)}} &\quad \textnormal{if}~~ a = 1, \\
\frac{1}{1+e^{\theta(s)}} &\quad \textnormal{if}~~ a = 2,
\end{cases}
\end{align*}
where $\theta:[L] \times [M] \rightarrow \R$.

Note that there exists a one-to-one correspondence between soft-threshold policy parametrization and the hard-threshold function. That is, given a threshold parameter $\theta$, there exists a unique $f$ under the linear relation, $\theta(s) = f(y) - x,~s = (x,y) \in [L] \times [M]$.

In the parameterized setting, we denote the respective value function and the discounted constraint functions by
\begin{align}
&J_c^{\pi_{\theta_n}}(\rho_n) \triangleq \E_{s_n(0) \sim \rho_n} \left[\mathbb{E}^{\pi_{\theta_n}} \left[ \sum_{t=0}^{\infty} \gamma^t c(s_n(t),s_n(t+1)| s_n(0) \right] \right]\\
&J_g^{\pi_{\theta_n}}(\rho_n) \triangleq \E_{s_n(0) \sim \rho_n} \left[\mathbb{E}^{\pi_{\theta_n}} \left[ \sum_{t=0}^{\infty} \gamma^t g(a_n(t))) |s_n(0) \right] \right]
\end{align}

Similarly, the Lagrangian value function for a given client $n$ with initial state distribution $\rho_n$, and $\lambda$  is given by
\begin{equation}
J^{\pi_{\theta_n}}(\rho_n;\lambda)
\triangleq  J_c^{\pi_{\theta_n}}(\rho_n) + \lambda \left(J_g^{\pi_{\theta_n}}(\rho_n) - \frac{\bar{K}}{N}\right).
\end{equation}

It is instructive to view $J^{\pi_{\theta_n}}(\rho_n;\lambda)$ as the value function with the cost function $\bar{c}(s_n(t),s_n(t+1)) \triangleq c(s_n(t),s_n(t+1)) + \lambda (g(a_n(t))-K/N)$. Therefore, we denote the corresponding advantage functions for $J^{\pi_{\theta_n}}(\rho_n;\lambda), J_c^{\pi_{\theta_n}}(\rho_n)$ and $J_g^{\pi_{\theta_n}}(\rho_n)$ by $A_{\lambda}^{\pi_{\theta_n}}(s,a)$,$A_c^{\pi_{\theta_n}}(s,a)$, and $A_g^{\pi_{\theta_n}}(s,a)$ respectively. It is easy to show that for any state-action pair $(s,a)$, the following relation holds true:
$A_{\lambda}^{\pi_{\theta_n}}(s,a)= A_c^{\pi_{\theta_n}}(s,a) + \lambda A_g^{\pi_{\theta_n}}(s,a)$.

The above parametrization enables us to use policy gradient methods. In particular, our approach to prove convergence is inspired by the framework of the Natural Policy Gradient (NPG) method for Constrained Markov Decision Processes~\cite{ding2020npg}.


\subsection{Convergence of NPG for Threshold Parametrization}
Consider the following primal-dual method to solve the minimization problem~\eqref{eqn:CMDP}.
\begin{align}\label{eqn:pseudoNPGalgo}
\theta_{n,{t+1}} &= \theta_{n,t} - \eta_1 F(\theta_{n,t})^{\dagger} \nabla_{\theta_n}J^{\pi_{\theta_{n,t}}}(\rho_n;\lambda_t) \nonumber\\
\lambda_{t+1} &= \cP_{\Lambda} \left(\lambda_t + \eta_2 \sum_{n=1}^N \nabla_{\lambda}J^{\pi_{\theta_{n,t}}}(\rho_n;\lambda_t)\right), 
\end{align}
where $\theta_{n,t}$ is the threshold parameter of client $n$ at time $t$, 
$F(\theta) \triangleq \E_{(s,a)\sim \pi_{\theta}}[\nabla_{\theta} \log\pi_{\theta}(a|s) (\nabla_{\theta} \log\pi_{\theta}(a|s))^\top]$ is the Fisher information matrix induced by the policy $\pi_{\theta}$, $A^\dagger$ is the Moore-Penrose inverse of matrix $A$, and $\cP_{\Lambda}$ is the projection operator onto the interval $\Lambda$. $\eta_1$ and $\eta_2$ are constant step-sizes.

Note that Algorithm~\eqref{eqn:pseudoNPGalgo} is a straightforward extension of the standard primal-dual approach to solve the saddle point problem~\eqref{eqn:saddlepoint}. The primal variables here are the threshold parameters of each client $n \in [N]$. At each iteration $t+1$, the threshold parameters of client $n$ follow an NPG update independent of the others. Intuitively, the update makes sense as it is known from Theorem~\ref{eqn:decentpolthm} that it is sufficient to consider client policies that are independent of each other, but the respective value functions are coupled only through the dual variable $\lambda_t$ from the previous iteration $t$. The dual variable update at each iteration, however, uses the sum of gradients from each client $n$. It is easy to see that the our approach reduces to the standard primal-dual method when $N=1$.


The following lemma shows that the soft-threshold parametrization has a rather simple update rule under the natural policy gradient method. Denoting actions $1$, $2$ by $a_1$ and $a_2$, respectively.

\begin{lem}\label{lem:npgupdateform}
For the soft-threshold policy parametrization, the following are equivalent:
\begin{enumerate}
    \item $\theta_{n,{t+1}} = \theta_{n,t} - \eta_1 F(\theta_{n,t})^{\dagger} \nabla_{\theta}J^{\pi_{\theta_{n,t}}}(\rho_n;\lambda_t)$\\
    \item $\theta_{n,{t+1}}(s) = \theta_{n,t}(s) - \frac{\eta_1}{(1 - \gamma)} [A^{\pi_{\theta_{n,t}}}_{\lambda_t}(s,a_1) - A^{\pi_{\theta_{n,t}}}_{\lambda_t}(s,a_2)].$
\end{enumerate}
\end{lem}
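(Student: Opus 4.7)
The plan is to verify the two sides of the equivalence by computing $F(\theta_{n,t})^{\dagger} \nabla_{\theta} J^{\pi_{\theta_{n,t}}}(\rho_n;\lambda_t)$ in closed form and observing a clean cancellation. The key structural feature that makes this tractable is that the soft-threshold parametrization assigns \emph{one} scalar parameter $\theta(s) \in \R$ per state, so $\nabla_{\theta(s')} \log \pi_{\theta}(a \mid s) = 0$ whenever $s' \neq s$. This decouples the gradient computation across states and forces the Fisher matrix to be diagonal.

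First, I would differentiate the sigmoid directly to get $\nabla_{\theta(s)} \log \pi_{\theta}(a_1 \mid s) = \pi_{\theta}(s, a_2)$ and $\nabla_{\theta(s)} \log \pi_{\theta}(a_2 \mid s) = -\pi_{\theta}(s, a_1)$, with the score vanishing at every other coordinate $s' \neq s$. Substituting this into $F(\theta) = \E_{(s,a)\sim \pi_{\theta}}[\nabla_{\theta}\log\pi_{\theta}(a\mid s)(\nabla_{\theta}\log\pi_{\theta}(a\mid s))^{\top}]$ kills all off-diagonal entries, and the diagonal entry simplifies, after using $\pi_{\theta}(s,a_1)+\pi_{\theta}(s,a_2)=1$, to $F(\theta)_{s,s} = d^{\pi_{\theta}}(s)\,\pi_{\theta}(s,a_1)\,\pi_{\theta}(s,a_2)$, where $d^{\pi_{\theta}}$ is the discounted state visitation distribution starting from $\rho_n$ that appears in the expectation.

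Next, I would invoke the standard policy gradient theorem applied to the Lagrangian value function $J^{\pi_{\theta_n}}(\rho_n;\lambda)$, viewed (as the excerpt points out) as the value function with modified cost $\bar{c}(s,s') = c(s,s')+\lambda(g(a)-K/N)$, whose advantage function is $A_{\lambda}^{\pi_{\theta_n}}$. Because the score is again supported only at the state being differentiated, the same bookkeeping yields $[\nabla_{\theta}J^{\pi_{\theta_n}}(\rho_n;\lambda)]_{s} = \tfrac{1}{1-\gamma}\, d^{\pi_{\theta}}(s)\,\pi_{\theta}(s,a_1)\,\pi_{\theta}(s,a_2)\,[A_{\lambda}^{\pi_{\theta_n}}(s,a_1) - A_{\lambda}^{\pi_{\theta_n}}(s,a_2)]$.

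Finally, pre-multiplying by $F(\theta)^{\dagger}$ coordinate-wise cancels the common factor $d^{\pi_{\theta}}(s)\,\pi_{\theta}(s,a_1)\,\pi_{\theta}(s,a_2)$, yielding $[F(\theta)^{\dagger}\nabla_{\theta}J^{\pi_{\theta_n}}(\rho_n;\lambda)]_{s} = \tfrac{1}{1-\gamma}\,[A_{\lambda}^{\pi_{\theta_n}}(s,a_1) - A_{\lambda}^{\pi_{\theta_n}}(s,a_2)]$, which plugged into item~(1) gives exactly the update in item~(2). The one delicate point, and the reason the pseudoinverse rather than the inverse appears in the statement, is handling states for which $d^{\pi_{\theta}}(s)=0$ or $\pi_{\theta}(s,a) \in \{0,1\}$: in those coordinates both the Fisher entry and the $s$-th component of $\nabla_{\theta}J^{\pi_{\theta_n}}$ vanish simultaneously, and the Moore-Penrose convention of sending $0$ to $0$ makes the identity in item~(2) consistent (any value of $\theta(s)$ on such coordinates is immaterial to the induced policy). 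I expect this degeneracy bookkeeping, rather than the core computation, to be the only real subtlety.
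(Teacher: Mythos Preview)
Your argument is correct and more direct than the paper's. You compute $F(\theta)$ and $\nabla_{\theta}J^{\pi_{\theta}}(\rho;\lambda)$ explicitly, observe they are both diagonal with the common factor $d^{\pi_{\theta}}(s)\,\pi_{\theta}(s,a_1)\,\pi_{\theta}(s,a_2)$, and cancel. The paper instead goes through the \emph{compatible function approximation} characterization of NPG: it sets up the least-squares problem
\[
\min_{w}\;\E_{s\sim d_{\rho}^{\pi_{\theta}},\,a\sim\pi_{\theta}}\bigl[(A_{\lambda}^{\pi_{\theta}}(s,a)-w\cdot\nabla_{\theta}\log\pi_{\theta}(s,a))^{2}\bigr],
\]
recalls that its minimizer is $w^{*}=(1-\gamma)F(\theta)^{\dagger}\nabla_{\theta}J^{\pi_{\theta}}$, expands the quadratic using the score identities you also derived, and reads off $w^{*}\cdot e_{s}=A_{\lambda}^{\pi_{\theta}}(s,a_{1})-A_{\lambda}^{\pi_{\theta}}(s,a_{2})$. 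Your route is shorter and perhaps more transparent; the paper's route has the advantage of tying the result directly to the advantage-regression step that appears in actor--critic implementations (and indeed in their Algorithm~\ref{algo:SA3}). Your treatment of the degenerate coordinates is also more careful than the paper's, which simply invokes $\pi_{\theta}(s,a)>0$ and does not explicitly discuss states with $d^{\pi_{\theta}}(s)=0$.
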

\begin{proof}
It is easy to show that for the soft-threshold policy class,
\begin{align}
    &\frac{ \partial{\log \pi_{\theta}(s,a)}}{\partial{\theta_{s'}}} \nonumber\\
    &= \mathbbm{1}\{s' = s\} (\mathbbm{1}\{a = a_1\} - \mathbbm{1}\{a = a_2\}) (1 - \pi_{\theta}(s,a)).
\end{align}
Therefore,
\begin{equation}\label{eqn:prop1}
\pi_{\theta}(s,a_1) \nabla_{\theta} \log \pi_{\theta}(s,a_1) + \pi_{\theta}(s,a_2) \nabla_{\theta} \log \pi_{\theta}(s,a_2) = 0,
\end{equation}
for any $s$, and
\begin{equation}\label{eqn:prop2}
 \nabla_{\theta} \log \pi_{\theta}(s,a_1) - \nabla_{\theta} \log \pi_{\theta}(s,a_2) = e_s,
\end{equation}
where $e_s \in \R^{\abs{\cS}}$ such that the $i$-th entry is $1$ if $i=s$ and $0$ otherwise. 

We now consider the following quadratic minimization problem:
\begin{align}
    \min_{w \in \R^{|\cS|}} \E_{s \sim d_{\rho_n}^{\pi_{\theta_{n,t}}}, a \sim \pi_{\theta_{n,t}}(s,\cdot)}
    &[(A^{\pi_{\theta_{n,t}}}_{\lambda_t}(s,a) \nonumber\\
    &- w\cdot \nabla_{\theta}  \log \pi_{\theta_t}(s,a))^2].
\end{align}
The optimal solution to the above minimization problem is given by $w^* = (1-\gamma)F(\theta_{n,t})^{\dagger} \nabla_{\theta}J^{\pi_{\theta_{n,t}}}(\rho_n;\lambda_t)$. 

Notice that the direction of primal update in equations~\eqref{eqn:pseudoNPGalgo} parallels this solution. 
Using this fact, we have,
\begin{align*}
&\E_{s \sim d_{\rho_n}^{\pi_{\theta_{n,t}}}, a \sim \pi_{\theta_{n,t}}(s,\cdot)}[(A^{\pi_{\theta_{n,t}}}_{\lambda_t}(s,a) - w\cdot \nabla_{\theta}  \log \pi_{\theta_{n,t}}(s,a))^2]\\
&= \E_{s \sim d_{\rho_n}^{\pi_{\theta_{n,t}}}}[\pi_{\theta_{n,t}}(s,a_1) (A^{\pi_{\theta_{n,t}}}_{\lambda_t}(s,a_1))^2 \nonumber\\
&+ \pi_{\theta_{n,t}}(s,a_2) (A^{\pi_{\theta_{n,t}}}_{\lambda_t}(s,a_2))^2 \nonumber\\
&- 2 ((A^{\pi_{\theta_{n,t}}}_{\lambda_t}(s,a_1) - A^{\pi_{\theta_{n,t}}}_{\lambda_t}(s,a_2))\nonumber\\
&* \pi_{\theta_{n,t}}(s,a_1) w\cdot \nabla_{\theta} \log \pi_{\theta_{n,t}}(s,a_1)) \nonumber\\
&+ \pi_{\theta_{n,t}}(s,a_1) w\cdot \nabla_{\theta}\log \pi_{\theta_{n,t}}(s,a_1)(w\cdot e_s)]\\
&= \E_{s \sim d_{\rho_n}^{\pi_{\theta_{n,t}}}}[\pi_{\theta_{n,t}}(s,a_1) (A^{\pi_{\theta_{n,t}}}_{\lambda_t}(s,a_1))^2 \nonumber\\
&+ \pi_{\theta_{n,t}}(s,a_2) (A^{\pi_{\theta_{n,t}}}_{\lambda_t}(s,a_2))^2] \nonumber\\
&+ \E_{s \sim d_{\rho_n}^{\pi_{\theta_{n,t}}}}[\pi_{\theta_{n,t}}(s,a_1) \pi_{\theta_{n,t}}(s,a_2) w\cdot e_s \nonumber\\
&(w\cdot e_s- 2 (A^{\pi_{\theta_{n,t}}}_{\lambda_t}(s,a_1) - A^{\pi_{\theta_{n,t}}}_{\lambda_t}(s,a_2)))]
\end{align*}
The first equality follows from expanding the quadratic expansion and~\eqref{eqn:prop1}, and the last equality from~\eqref{eqn:prop2}. It is easy to see that the optimal solution $w^*$ is also the optimal minimizer for the second term in the last equality. Using the fact that the stochastic policy $\pi_{\theta_{n,t}}(s,a) > 0$ for all state-action pairs, we have $w^*\cdot e_s = (A^{\pi_{\theta_{n,t}}}_{\lambda_t}(s,a_1) -A^{\pi_{\theta_{n,t}}}_{\lambda_t}(s,a_2)).$
\end{proof}

An immediate consequence of Lemma~\ref{lem:npgupdateform} is the following form for the policy updates.

\begin{cor}\label{cor:politerate}
The policy iterates for algorithm~\eqref{eqn:pseudoNPGalgo} are as follows:
\begin{align}
\pi_{\theta_{n,t+1}}(s,a) &= \pi_{\theta_{n,t}}(s,a) \frac{\exp(\frac{-\eta_1}{1-\gamma}A^{\pi_{\theta_{n,t}}}_{\lambda_t}(s,a))}{Z_t(s)},\\
\text{where}~~ Z_t(s) &\triangleq \sum_{a \in \cA} \pi_{\theta_{n,t}}(s,a) \exp(\frac{-\eta_1}{1-\gamma}A^{\pi_{\theta_{n,t}}}_{\lambda_t}(s,a)).
\end{align}
\end{cor}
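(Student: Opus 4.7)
The plan is to derive the claimed multiplicative update by substituting the per-state parameter update from Lemma~\ref{lem:npgupdateform} into the sigmoid form of the soft-threshold parametrization. Writing $\alpha_{t}(s,a) \triangleq -\frac{\eta_1}{1-\gamma}A^{\pi_{\theta_{n,t}}}_{\lambda_t}(s,a)$ for $a\in\{a_1,a_2\}$, Lemma~\ref{lem:npgupdateform} gives
\[
\theta_{n,t+1}(s) - \theta_{n,t}(s) \;=\; \alpha_{t}(s,a_1)-\alpha_{t}(s,a_2).
\]
Meanwhile the normalizer in the corollary is $Z_t(s)=\pi_{\theta_{n,t}}(s,a_1)e^{\alpha_{t}(s,a_1)}+\pi_{\theta_{n,t}}(s,a_2)e^{\alpha_{t}(s,a_2)}$. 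So there are only finitely many quantities to track.

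The first step I would carry out is to recall from the soft-threshold parametrization that $\pi_\theta(s,a_1)/\pi_\theta(s,a_2)=e^{\theta(s)}$. Applying this to $\theta_{n,t+1}$ and using the update,
\[
\frac{\pi_{\theta_{n,t+1}}(s,a_1)}{\pi_{\theta_{n,t+1}}(s,a_2)}
= e^{\theta_{n,t}(s)}\,e^{\alpha_{t}(s,a_1)-\alpha_{t}(s,a_2)}
= \frac{\pi_{\theta_{n,t}}(s,a_1)\,e^{\alpha_{t}(s,a_1)}}{\pi_{\theta_{n,t}}(s,a_2)\,e^{\alpha_{t}(s,a_2)}}.
\]
That is, the ratio of the updated policy across the two actions coincides exactly with the ratio of the proposed multiplicative form, independent of the $Z_t(s)$ factor.

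The second step is to invoke normalization. Both $\pi_{\theta_{n,t+1}}(s,\cdot)$ and the right-hand side $\pi_{\theta_{n,t}}(s,a)\exp(\alpha_{t}(s,a))/Z_t(s)$ are probability distributions over $\mathcal{A}=\{a_1,a_2\}$: the former by definition of the sigmoid parametrization, and the latter by construction of $Z_t(s)$ as the sum of the unnormalized weights. Since two distributions on a two-element set with the same ratio of masses must coincide, the two sides are equal action-wise, which is the desired identity.

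There is no real obstacle here; the argument is essentially bookkeeping once Lemma~\ref{lem:npgupdateform} is in hand. The only subtlety is ensuring that $\pi_{\theta_{n,t}}(s,a)>0$ for both actions so that dividing by $\pi_{\theta_{n,t}}(s,a_2)$ and by $Z_t(s)$ is legitimate, but this is automatic from the sigmoid form for any finite $\theta_{n,t}(s)$. If desired, one may present the argument more explicitly by writing $\pi_{\theta_{n,t+1}}(s,a_1) = e^{\theta_{n,t+1}(s)}/(1+e^{\theta_{n,t+1}(s)})$, multiplying numerator and denominator by $e^{-\theta_{n,t}(s)-\alpha_{t}(s,a_2)}$, and recognizing $Z_t(s)/e^{\alpha_{t}(s,a_2)}$ in the denominator (and symmetrically for $a_2$).
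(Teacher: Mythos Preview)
Your argument is correct and follows essentially the same route as the paper's proof: both start from the per-state update of Lemma~\ref{lem:npgupdateform} and plug it into the sigmoid parametrization. The only difference is cosmetic---the paper carries out the substitution into $e^{\theta_{n,t+1}(s)}/(1+e^{\theta_{n,t+1}(s)})$ line by line and then normalizes by $1+e^{\theta_{n,t}(s)}$ to recover $Z_t(s)$, whereas you compress this into the ratio-plus-normalization observation; the content is identical.
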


The proof of Corollary~\ref{cor:politerate} is given in Appendix ~\ref{sec:append-lemma1-2}.\\

Evaluating the gradient with respect to the dual variable in the dual update, the equivalent primal-dual method to algorithm~\eqref{eqn:pseudoNPGalgo} is given by
\begin{align}\label{eqn:pseudoNPGalgo2}
&\theta_{n,{t+1}}(s) = \theta_{n,t}(s) -  \frac{\eta_1}{(1 - \gamma)} [A^{\pi_{\theta_{n,t}}}_{\lambda_t}(s,a_1) - A^{\pi_{\theta_{n,t}}}_{\lambda_t}(s,a_2)] \nonumber\\
&\lambda_{t+1} = \cP_{\Lambda}\left(\lambda_t + \eta_2 \left(\sum_{n=1}^N J_g^{\pi_{\theta_{n,t}}}(\rho_n) - \bar{K}\right)\right). 
\end{align}
We have the following convergence guarantee for algorithm \eqref{eqn:pseudoNPGalgo2}.
\begin{thm}\label{thm:finiteconvthm}
Let $\Lambda = [0,\frac{2N}{(1-\gamma) \xi}]$, $\rho_n \in \triangle_{\cS_n}, \theta_{n,0} = \mathbf{0}$, be the starting state distribution and threshold parameter initialization of client $n$, and $\lambda_0 = 0$, be the initial dual variable. For the particular choice of $\eta_1 = \log|\cA|$, $\eta_2 = \frac{1-\gamma}{N\sqrt{T}}$, the iterates generated by the algorithm \eqref{eqn:pseudoNPGalgo} satisfy
\begin{align}
    \frac{1}{T} \sum_{t = 0}^{T-1} \sum_{n = 1}^N (J_c^{\pi_{\theta_{n,t}}}(\rho_n) - J_c^{\pi_n^*}(\rho_n)) &\leq \frac{4N}{(1-\gamma)^2 \sqrt{T}}. \\
        \left(\frac{1}{T} \sum_{t = 0}^{T-1} \sum_{n = 1}^N J_g^{\pi_{\theta_{n,t}}}(\rho) - \bar{K} \right)^+ &\leq  \frac{(2/\xi + 4\xi)N^2}{(1-\gamma)^2\sqrt{T}}.
\end{align}
\end{thm}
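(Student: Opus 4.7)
The plan is to adapt the primal--dual regret analysis of Ding et al.\ 2020 for CMDPs to our soft-threshold parametrization. The pivotal observation is that Corollary~\ref{cor:politerate} already identifies our per-client update as tabular softmax-NPG: at every state $s$ we multiply $\pi_{\theta_{n,t}}(s,a)$ by $\exp(-\eta_1 A^{\pi_{\theta_{n,t}}}_{\lambda_t}(s,a)/(1-\gamma))$ and renormalize, and because $|\cA|=2$ the scalar parameter $\theta(s)$ spans the full stochastic simplex at $s$. Consequently the tabular softmax-NPG machinery of Agarwal et al.\ 2021 applies per client with no function-approximation error, and the task reduces to combining $N$ such primal analyses with a standard projected-subgradient dual analysis.

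First I would establish a per-client primal regret bound that holds uniformly over the time-varying Lagrangian weight. Using the performance-difference lemma and the KL Lyapunov function $\Phi_{t,n}(\pi_n) := \E_{s\sim d^{\pi_n}_{\rho_n}}[\mathrm{KL}(\pi_n(\cdot|s)\,\|\,\pi_{\theta_{n,t}}(\cdot|s))]$, the multiplicative-weights update telescopes to
\begin{align*}
\sum_{t=0}^{T-1}\bigl(J^{\pi_{\theta_{n,t}}}(\rho_n;\lambda_t) - J^{\pi_n}(\rho_n;\lambda_t)\bigr) \leq \frac{\log|\cA|}{\eta_1 (1-\gamma)} + \frac{\eta_1 T (1+\lambda_{\max})^2}{(1-\gamma)^3}
\end{align*}
for every comparator $\pi_n$, where $\lambda_{\max} := 2N/((1-\gamma)\xi)$ is the projection radius of $\Lambda$ and the $(1+\lambda_{\max})^2$ factor arises from the uniform $L^\infty$ bound on the Lagrangian advantage. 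For the dual side, the update is standard projected subgradient ascent with bounded $\lambda$-subgradient $\sum_n J_g^{\pi_{\theta_{n,t}}}(\rho_n) - \bar K$; a one-line convex-analysis argument gives, for every $\lambda\in\Lambda$,
\begin{align*}
\sum_{t=0}^{T-1}(\lambda-\lambda_t)\Bigl(\bar K - \sum_{n=1}^N J_g^{\pi_{\theta_{n,t}}}(\rho_n)\Bigr) \leq \frac{\lambda_{\max}^2}{2\eta_2} + \frac{\eta_2 T N^2}{2(1-\gamma)^2}.
\end{align*}

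Summing these two regrets yields a saddle-point inequality valid for every $(\pi=\otimes\pi_n,\,\lambda)\in\Pi\times\Lambda$. To obtain the first claim I would choose the comparators $\pi_n=\pi_n^*(\lambda^*)$ and $\lambda=0$: by Theorem~\ref{eqn:decentpolthm} and strong duality under Assumption~\ref{eqn:slatercond} the left side collapses to the cost gap $\sum_t\sum_n(J_c^{\pi_{\theta_{n,t}}}(\rho_n)-J_c^{\pi_n^*}(\rho_n))$, and substituting $\eta_1=\log|\cA|$, $\eta_2=(1-\gamma)/(N\sqrt T)$ delivers the $4N/((1-\gamma)^2\sqrt T)$ rate. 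For the constraint-violation bound I would run the usual Slater trick: on the event that the average violation is $V>0$, invoke the saddle-point inequality with $\lambda=\lambda_{\max}$ and $\pi_n=\bar\pi_n$ from Assumption~\ref{eqn:slatercond}; combining $J_c\geq 0$ with the $\xi$-slack then forces $V\leq (2/\xi+4\xi)N^2/((1-\gamma)^2\sqrt T)$.

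The main technical obstacle is bookkeeping the $\lambda_{\max}$-dependence cleanly so that the final constants collapse to the precise $4N$ and $(2/\xi+4\xi)N^2$ coefficients after substituting $\lambda_{\max}=2N/((1-\gamma)\xi)$; in particular, one must verify that the primal $(1+\lambda_{\max})^2\eta_1 T/(1-\gamma)^3$ term is dominated by the dual term under the prescribed step sizes, which is what singles out the specific choices $\eta_1=\log|\cA|$ and $\eta_2=(1-\gamma)/(N\sqrt T)$. The equivalence between our threshold parametrization and tabular softmax furnished by Lemma~\ref{lem:npgupdateform} is what eliminates any approximation-error term and enables the clean $O(1/\sqrt T)$ rate.
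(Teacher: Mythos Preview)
Your overall primal--dual skeleton is sound, and the observation that Corollary~\ref{cor:politerate} reduces the update to tabular softmax-NPG is exactly right. However, the primal regret inequality you write down,
\[
\sum_{t=0}^{T-1}\bigl(J^{\pi_{\theta_{n,t}}}(\rho_n;\lambda_t) - J^{\pi_n}(\rho_n;\lambda_t)\bigr) \;\leq\; \frac{\log|\cA|}{\eta_1 (1-\gamma)} + \frac{\eta_1 T (1+\lambda_{\max})^2}{(1-\gamma)^3},
\]
is the standard multiplicative-weights bound in which the second term comes from upper-bounding $\log Z_t$ by a quadratic in the Lagrangian advantage. With the prescribed \emph{constant} primal step size $\eta_1=\log|\cA|$ and $\lambda_{\max}=2N/((1-\gamma)\xi)$, that second term is $\Theta(T\cdot N^2/((1-\gamma)^5\xi^2))$; after dividing by $T$ it does \emph{not} vanish, so it cannot be ``dominated by the dual term'' as you claim. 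Your argument would force $\eta_1=O(1/\sqrt{T})$, contradicting the hypothesis of the theorem and in any case yielding worse dependence on $(1-\gamma)$ and $\xi$ than stated.

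The paper avoids this term altogether via an Improvement Lemma: from $\log Z_t(s)\geq 0$ one shows
\[
\frac{1}{\eta_1}\,\E_{s\sim d^{\pi^*}_{\rho}}\log Z_t(s)\;\leq\;\frac{1}{1-\gamma}\Bigl(J^{\pi_{\theta_t}}_{\lambda_t}(\rho^*)-J^{\pi_{\theta_{t+1}}}_{\lambda_t}(\rho^*)\Bigr),
\]
i.e.\ the normalizer is controlled by a one-step \emph{improvement} of the Lagrangian value rather than by $\eta_1(1+\lambda_{\max})^2$. Summing over $t$ this telescopes, and the only residual comes from the drift $\lambda_t-\lambda_{t-1}$, which is bounded by $\eta_2 N/(1-\gamma)$ via the dual update. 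Thus the primal error is governed by $\eta_2$, not $\eta_1$, and this is precisely what permits the constant choice $\eta_1=\log|\cA|$ while still achieving $O(1/\sqrt{T})$. Your proposal is missing this ingredient; without it the argument does not close at the stated rate or step sizes.
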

The proof sketch is as follows.  
We first show that the NPG method has a simple parameter update for the threshold policy class. We then prove that the average cost function generated by the iterates of the algorithm converges to the global optimal value, and bound the constraint violation. The details are provided in appendix.  

Note that the quadratic scaling of constraint error with the number of users $N$ arises due the resources being fixed. It is easy to show that if the ability to support high-quality service, $\bar{K}$ scales linearly with $N$, the constraint error will also only be linear in $N.$


\textbf{Remark:} The notion of soft-threshold parametrization coupled with natural policy gradient is motivated by constrained RL using soft-max parametrization with natural policy gradient~\cite{ding2020npg}.    However, the convergence analysis is not directly transferable from one to the other, since the structure of Fischer-information matrix induced by the soft-max policy is different from that of the soft-threshold policy, which translates to different primal dual updates in equation~\eqref{eqn:pseudoNPGalgo2}.  Nevertheless, the overall flow of the proof is adaptable to the soft-threshold case, and we obtain a similar rate of convergence.

\subsubsection*{Algorithm Design}
Algorithm \ref{algo:SA3} summarizes our approach~\eqref{eqn:pseudoNPGalgo} consistent with the standard actor-critic implementations of RL algorithms. We iteratively update the value function (and hence the advantage function), the policy parameter, and the Lagrange multiplier.  At each time step, we run one actor-critic update for each client. In particular, for each client $n \in N$, we sample an action according to the current policy of the client to update the value function of the current state $J_{n,t+1}(s_n(t))$ from which the advantage function for each action $a \in \cA$, $A_{n,t}(s_n(t),a)$ is computed. The threshold parameter for the current state is updated according to~\eqref{eqn:pseudoNPGalgo}. The value functions and parameters of all but the current state remain unchanged. After each client updates its value function and its policy, the experienced actions are retrieved, and the Lagrange multiplier is updated using the constraint formulation.

\subsubsection*{Note} In our implementation we update the threshold parameter with just the advantage of action $a_1$ instead of the difference between advantages. This variant of the threshold update is harmless and does not change the direction of the gradient since the advantages of different actions have opposite signs owing to the fact that expected advantage is always zero.

\begin{algorithm}
\caption{Threshold Natural Policy Gradient Algorithm}
\label{algo:SA3}
\begin{algorithmic}[1]
\STATE \textbf{Initialize} 
Set threshold $\theta_{n,0} = 0$, starting state $s_n(0) \sim \rho_n$ for every client $n \in N$, $\eta_1 = \ln{2}$, $\eta_2 = \frac{1-\gamma}{N\sqrt{T}}$, and $\lambda_0 = 0$. Set state visitation count $\
\eta(n,s) = 1~~\forall s \in \cS_n$ and for each client $n \in N$.
 \FOR{$t =0, 1, \dots,$}
 \FOR{Each client $n = 0,1,\dots,N$}
  \STATE Update the count $\eta(n,s_n(t)) \gets \eta(n,s_n(t)) + 1.$
  \STATE Take the action $a_n(t) \sim \pi_{\theta_{n,t}}(s_n(t),.)$.
  \STATE  Obtain the next state $s_n(t+1)$, and cost $c(s_n(t),s_n(t+1))$, $g(a_n(t))$  from the environment.
  \STATE  Update the value function according to
 \begin{equation*} 
    \begin{aligned}
     &J_{n,t+1}(s_n(t)) = J_{n,t}(s_n(t)) + \frac{1}{\eta(n,s_n(t))} \left[  \lambda_t g(a_n(t))  \right.\\
      +&\left. [c(s_n(t),s_n(t+1)) +\gamma J_{n,t}(s_n(t+1))] - \right.\\
      &\left. \hspace{40mm} J_{n,t}(s_n(t)) \right], \\
     &J_{n,t+1}(s') = J_{n,t}(s'), \forall s' \neq s_n(t).
\end{aligned}
 \end{equation*}
 \STATE  Update the threshold parameter
\begin{align*}
 &\theta_{n,{t+1}}(s_n(t)) \gets \theta_{n,t}(s_n(t))\\
 &-  \frac{\eta_1}{(1 - \gamma)} [A_{n,t}(s_n(t),a_1) - A_{n,t}(s_n(t),a_2)],\\
 &\theta_{n,{t+1}}(s) = \theta_{n,t}(s), \forall s \neq s_n(t).
\end{align*}
\ENDFOR
\STATE Update the Lagrange multiplier\\
$  \lambda_{t+1} = \cP_{\Lambda}(\lambda_t + \eta_2 [\sum_{n=1}^N g(a_n(t))  - \bar{K}]).
$
\ENDFOR
\end{algorithmic}
\end{algorithm}

\begin{figure*}[htbp]
\centering
\begin{minipage}{.32\textwidth}
\centering
\includegraphics[width=1\columnwidth]{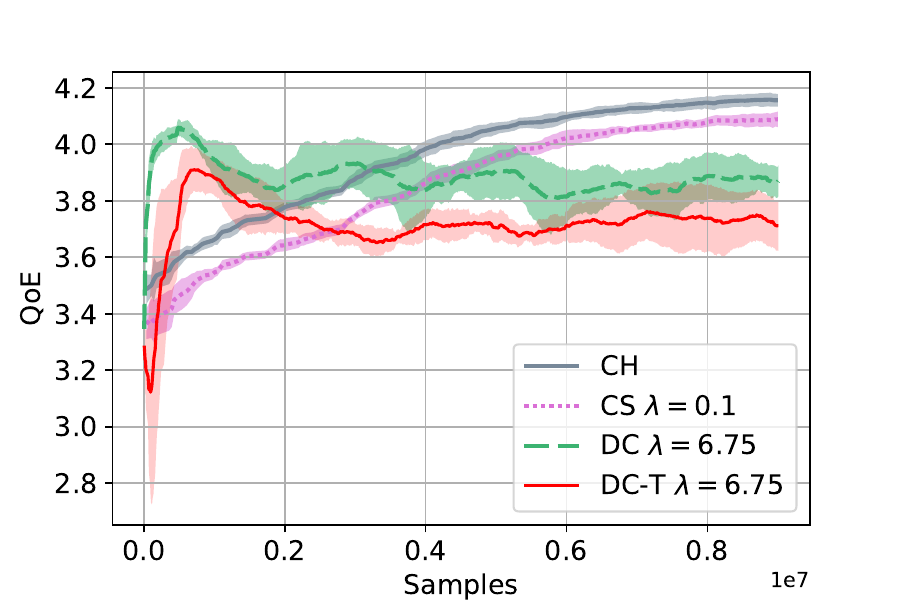}
\caption{Training in simulation}
\label{fig:sim_train_qoe}
\end{minipage}
\hspace{-0.005in}
\begin{minipage}{.32\textwidth}
\centering
\includegraphics[width=1\columnwidth]{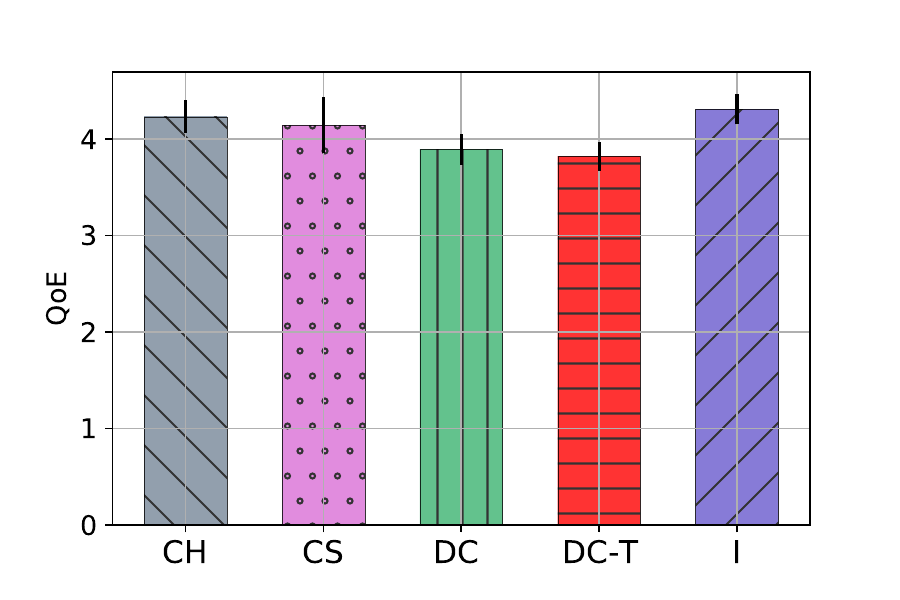}
\caption{QoE in simulation}
\label{fig:sim_qoe}
\end{minipage}
\hspace{-0.05in}
\begin{minipage}{.32\textwidth}
\centering
\includegraphics[width=1\columnwidth]{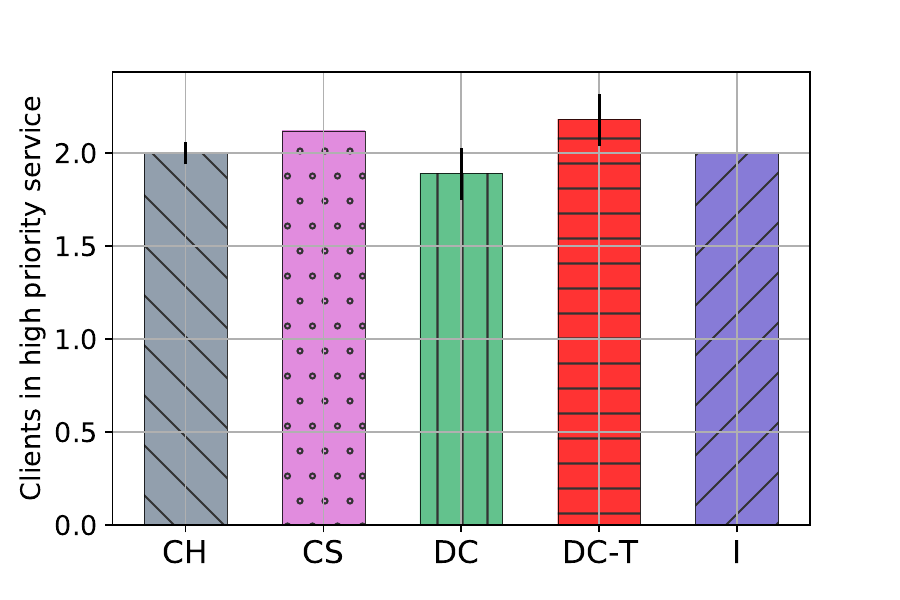}
\vspace{-0.15in}
\caption{Clients in high priority service}
\label{fig:sim_HPS}
\end{minipage}
\vspace{-0.1in}
\end{figure*}

\begin{figure}
\includegraphics[width=0.9\columnwidth]{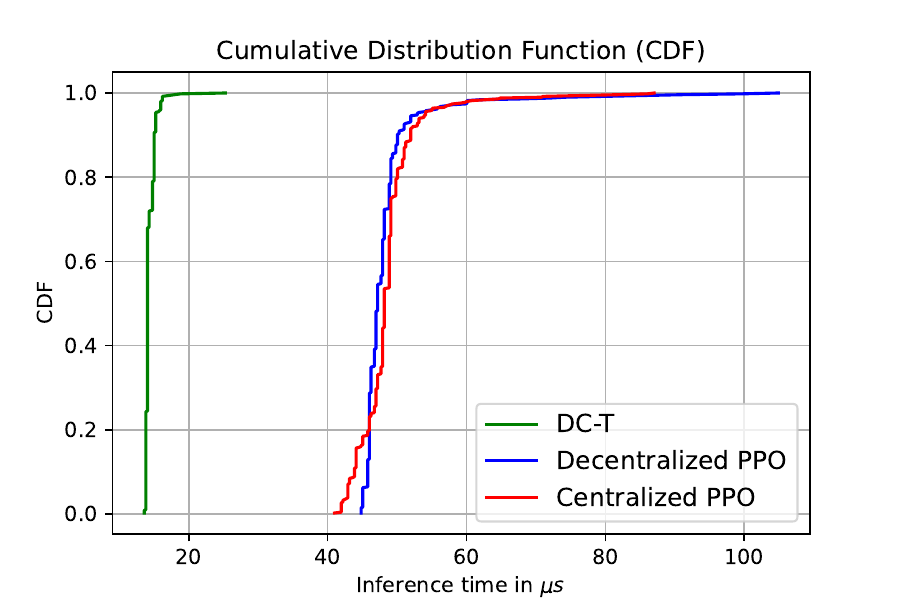}
\caption{Inference time for DC-T and PPO algorithms for $6$ clients}
\label{fig:compute_time}
\vspace{-0.1in}
\end{figure}

\section{Simulation-based Training}
\label{section:simulation}
We first train our system by developing a simulator that captures the dynamics of the system shown in Figure~\ref{fig:overview}. The simulator consists of a wireless access point (AP) with two service classes, each has a fixed bandwidth (e.g., 12 Mbps for ``high'', and 4 Mbps for ``low'') and $N$ clients that desire service, with our typical setting being 6 or fewer clients.  An intelligent controller assigns clients to one of the service classes, and the available bandwidth of that class is partitioned equally among the clients assigned to it, with the realized bandwidth being drawn from a normal distribution around the mean value to model environmental randomness.  For instance, if we set a constraint of at most 2 occupants of the ``high'' service class, each would get a mean throughput of 6 Mbps per-client in the example.  Each client has a state $(x,y),$ consisting of the video buffer length and the number of stalls.  The QoE of the client is calculated using the DQS model described in Section~\ref{section: Problem Formulation}.   Note that we use ``reward'' using the QoE, rather than ``cost'' as used in the analytical model, since most RL implementations use this approach (i.e., we multiply ``cost'' by -1 and maximize instead of minimizing).  

The state of the system as a whole is the union of the states of all $6$ clients, which in the simulator is uncountably infinite, since the video buffer is a real number and the number of stalls can be unbounded.  The action is the service assignment of each client. Hence, there are a total of $2^6$ possible actions ($2$ for each client).  As indicated in Section~\ref{section: Problem Formulation}, when the number of clients in each service class is fixed, the system can be treated as $N$ independent single-client systems, each having two service classes with appropriately scaled-down bandwidth.  Intuitively, training on this system should be faster, since we obtain $N$ [state, action, next-state, reward] samples per time step here, as opposed to just one in the joint system case.  Furthermore, solving using the Dual approach introduces a Lagrange multiplier $\lambda,$ which also has to be learned by performing a hyperparameter search to ensure that when the controller follows the optimal policy the number of clients in each service class satisfies the service class constraints.  

\subsection{Algorithms}
\textbf{Vanilla (V):}  This is a simple policy that merges all available throughput into a single service class and shares it equally.  It is equivalent to CSMA-based random access in WiFi, or a simple round-robin scheduler in cellular.  We will use this as a base policy in the real-world system.

\textbf{Greedy Buffer (GB):} This algorithm awards high priority service to the clients with the lowest video buffer state, subject to resource constraints.   The algorithm follows the general theme of max-weight~\cite{TasEph92} and min-deficit~\cite{HouBor09} being throughput or timely-throughput optimal in queueing systems.  However, it does not account for the dependence of QoE on stall count.  We use it as a well-established algorithm with good performance in the real-world evaluation.  

\textbf{Centralized Hard PPO (CH):} This algorithm represents recent efforts at applying off-the-shelf RL techniques in media streaming applications, such as those presented in \cite{mao2017neural,bhattacharyya2019qflow}.  In particular, QFlow~\cite{bhattacharyya2019qflow} shows that unstructured Deep Q-Learning (DQN) performs really well in a context much like ours but requires a long training duration.  Policy gradient algorithms are the current norm, having much better empirical performance than DQN, and so we pick Proximal Policy Optimization (PPO)~\cite{schulman2017proximal}.  We implement it with a hard constraint of 2 clients with ``high'' service and 4 with ``low'' service.

\textbf{Centralized Soft PPO (CS):}  This algorithm implements the Lagrangian relaxation by adding a penalty $\lambda$ for accessing ``high'' service.  We use the same PPO algorithm, but do not impose a hard constraint as in CH.  Rather, the algorithm is trained using a hyperparameter search over $\lambda$ such that we have an average of 2 clients in the ``high'' service class.

\textbf{Decentralized PPO (DC):} This algorithm takes advantage of the conditional independence property described in Section~\ref{section: Problem Formulation} that enables division into $N$ independent systems, along with a Lagrangian relaxation of the constraints.  We now use the PPO algorithm on an individual client basis, but since we obtain 6 samples per step, it can train much faster.  However, we do not impose any structure on the policy class.  Again, we need to find the appropriate penalty $\lambda$ to enforce the constraint.

\textbf{Decentralized Threshold (DC-T)} This algorithm is similar to DC in utilizing division into $N$ systems for faster learning, but also imposes a threshold structure of the optimal algorithm.  It requires only one neuron (logistic function) to represent a threshold policy, which makes for simple implementation. Like DC, it too can train faster than the centralized approaches, and also needs an appropriate penalty term to enforce constraints.  

\textbf{Index (I):} This algorithm follows the philosophy behind the Whittle Index~\cite{whittle1988restless} in converting a soft-constrained into a hard-constrained and robust policy.  We order the states of clients based on their values obtained from DC-T, and provide ``high'' service the two with the highest value. We will see that this can be used regardless of the number of clients, or the quality of their channels.

\subsection{Training and Evaluation}
Figure \ref{fig:sim_train_qoe} shows the evaluation of the algorithms on a joint system during training (averaged over $5$ random seeds).  We observe that decentralized algorithms converge over four times faster than centralized algorithms as they exploit the structure of the environment. The performance difference between the trained centralized and decentralized algorithms is negligible as seen in Figure~\ref{fig:sim_qoe} (averaged over $100$ runs).  The performance of DC-T is on par with the best performing algorithms, which confirms our hypothesis that the optimal policy is indeed a threshold policy.  The index policy, I that is hard-constrained version of DC-T shows similarly high performance.  Figure~\ref{fig:sim_HPS} (averaged over $100$ runs) shows the number of clients in the high priority queue during evaluation of the soft constrained algorithms is near 2, and so confirms our choice of $\lambda$.
Finally, Figure~\ref{fig:compute_time} shows the efficacy of the DC-T algorithm in deployment in terms of inference times. We see that the PPO algorithms take roughly 50 to 60 $\mu$s, while the DC-T algorithm takes only about 10 to 15 $\mu$s for inferring the decisions for the $6$ client system.

\section{Real-World Evaluation}
\begin{figure*}[htbp]
\begin{minipage}{.32\textwidth}
\centering
\includegraphics[width=1\columnwidth]{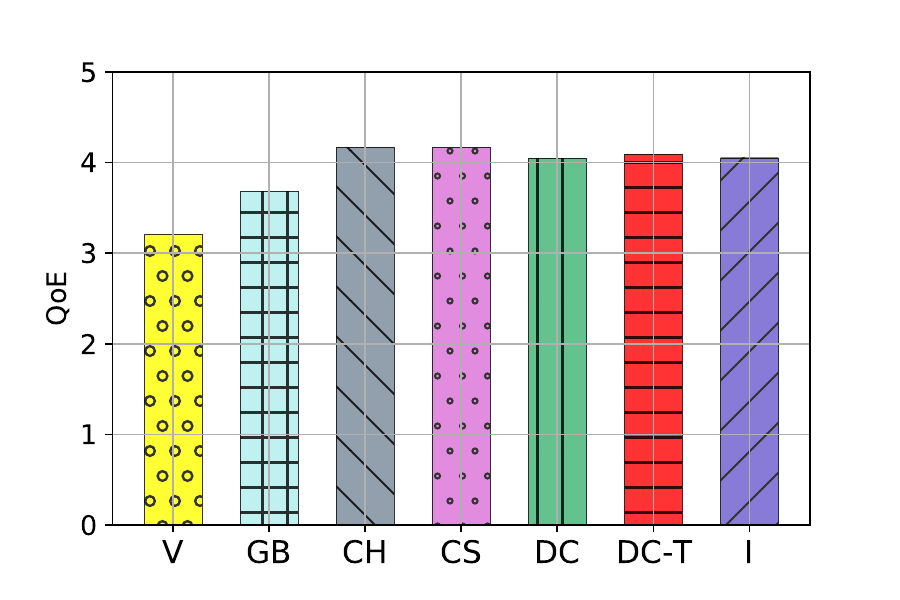}
\caption{Comparison of average QoE}
\label{fig:scenario1_1}
\end{minipage}\hfill
\begin{minipage}{.32\textwidth}
\centering
\includegraphics[width=1\columnwidth]{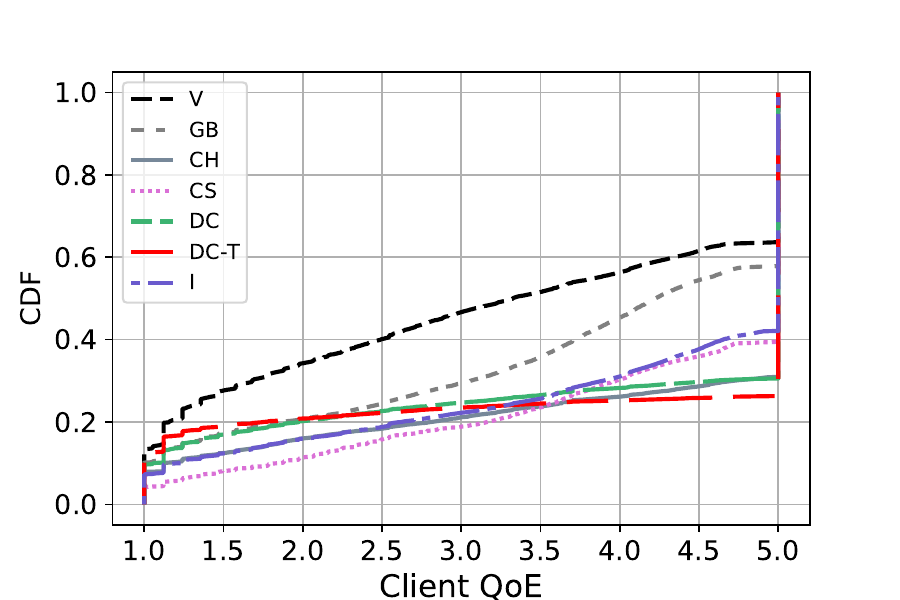}
\caption{Comparison of QoE CDF }
\label{fig:scenario1_2_qoeCDF}
\end{minipage}\hfill
\begin{minipage}{.32\textwidth}
\centering
\includegraphics[width=1\columnwidth]{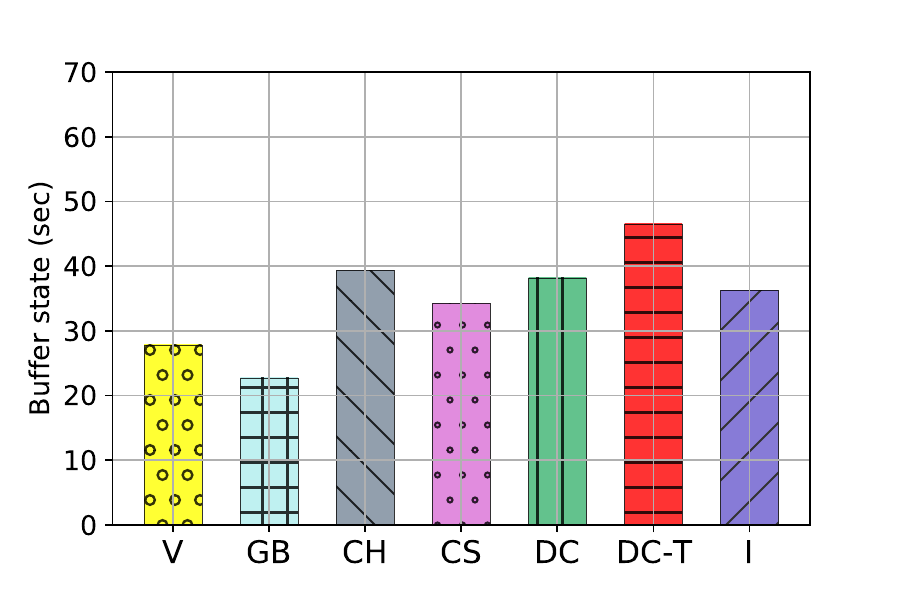}
\caption{Comparison of average buffer}
\label{fig:scenario1_3}
\end{minipage}\hfill
\vspace{-0.1in}

\begin{minipage}{.32\textwidth}
\centering
\includegraphics[width=1\columnwidth]{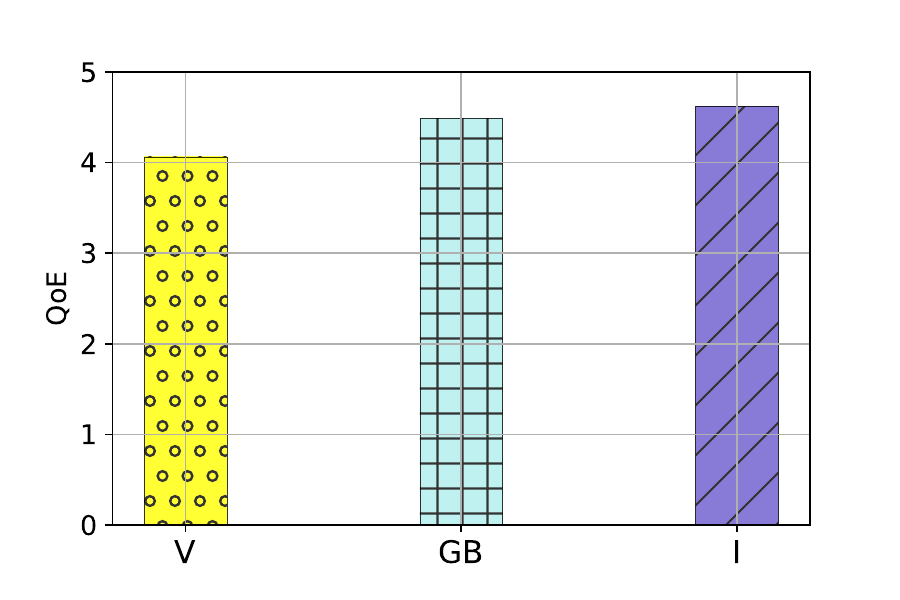}
\caption{Comparison of average QoE with varying number clients}
\label{fig:scenario2_1_qoe}
\end{minipage}\hfill
\begin{minipage}{.32\textwidth}
\centering
\includegraphics[width=1\columnwidth]{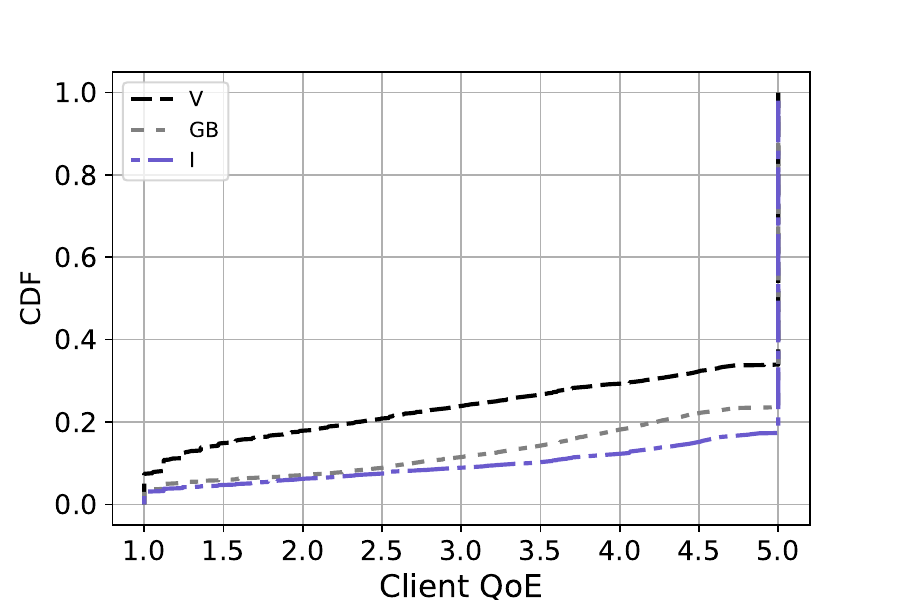}
\caption{Comparison of QoE CDF with varying number of clients}
\label{fig:scenario2_2_qoeCDF}
\end{minipage}\hfill
\begin{minipage}{.32\textwidth}
\centering
\includegraphics[width=1\columnwidth]{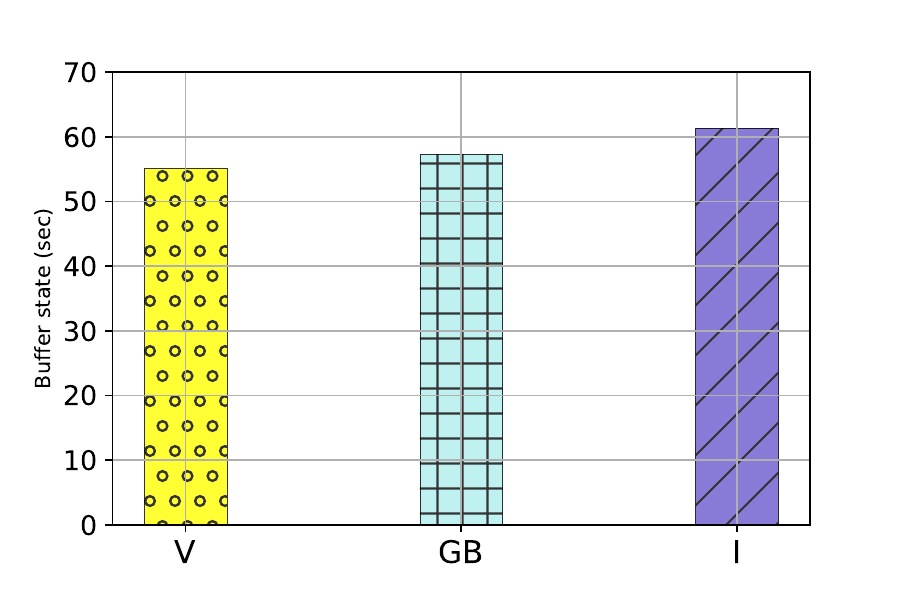}
\caption{Comparison of average buffer state with varying number of clients}
\label{fig:scenario2_3_buffer}
\end{minipage}\hfill
\vspace{-0.1in}

\begin{minipage}{.32\textwidth}
\centering
\includegraphics[width=1\columnwidth]{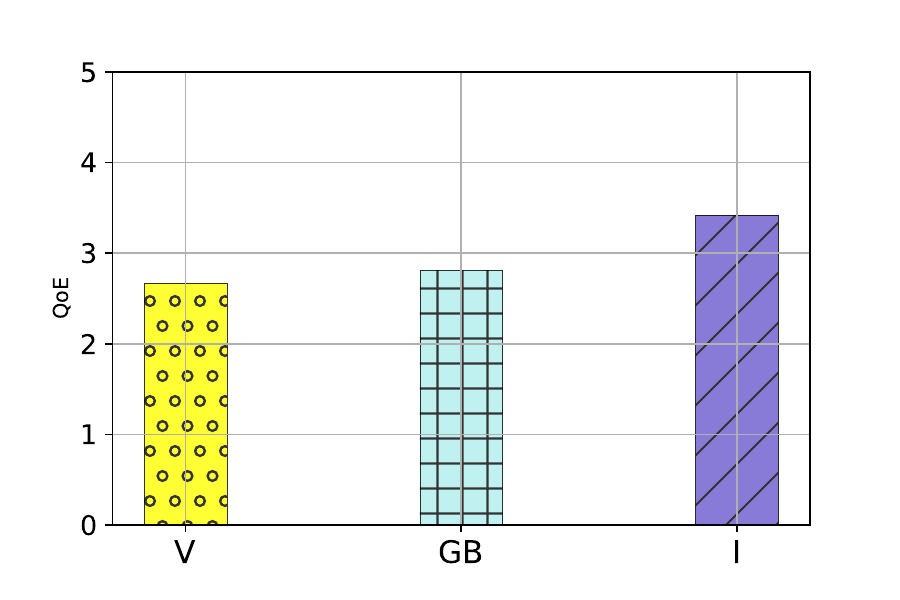}
\caption{Comparison of average QoE in a poor channel}
\label{fig:scenario3_1_qoe}
\end{minipage}\hfill
\begin{minipage}{.32\textwidth}
\centering
\includegraphics[width=1\columnwidth]{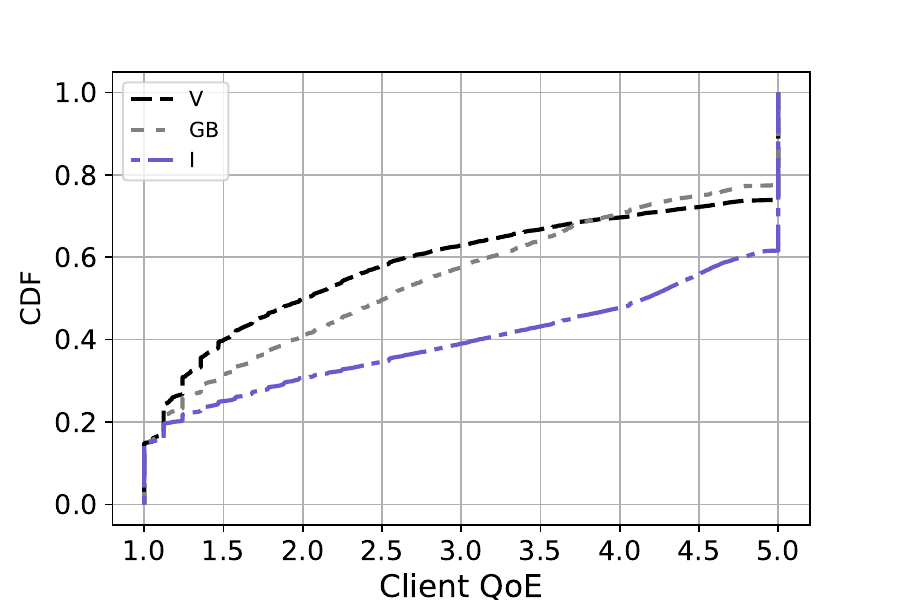}
\caption{Comparison of QoE CDF in a poor channel }
\label{fig:scenario3_2_qoeCDF}
\end{minipage}\hfill
\begin{minipage}{.32\textwidth}
\centering
\includegraphics[width=1\columnwidth]{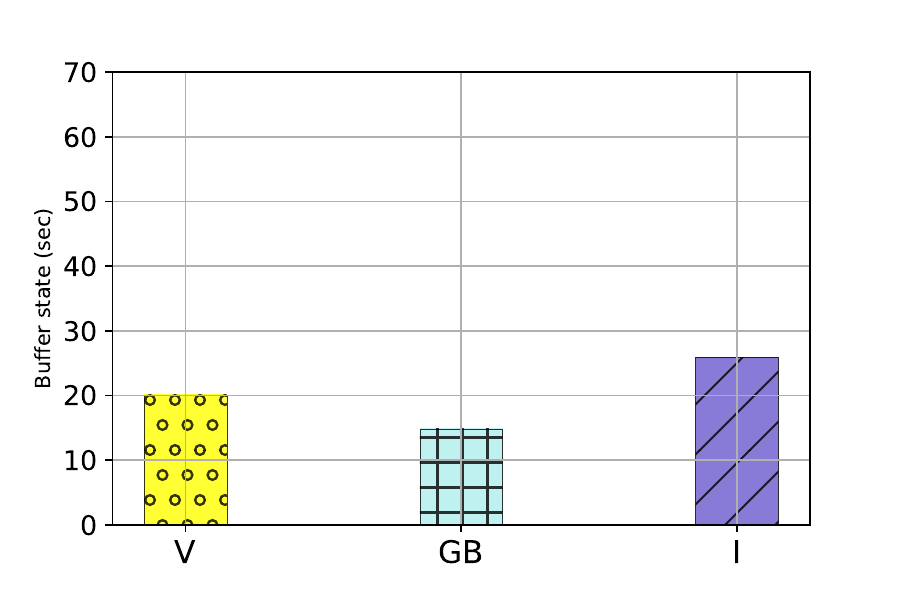}
\caption{Comparison of average buffer state in a poor channel }
\label{fig:scenario3_3_buffer}
\end{minipage}\hfill
\vspace{-0.1in}
\end{figure*}

We follow an experimental platform identical to \cite{bhattacharyya2019qflow} in which we use four Intel NUCs, with three of the NUCs hosting YouTube sessions (2 each) and the last hosting the intelligent controller and an SQL database.  We use a pub-sub approach at the database to collect state and reward data and to publish control actions.  The client NUCs have a list of popular 1080p YouTube videos, which they randomly sample, play and then flush their buffers.  We force the players to 1080p resolution across all sessions for fairness.  Relevant data such as the buffer length and number of stalls are gathered in the database and shared with the intelligent controller for decision making.  We couple this system with a WiFi access point running OpenWRT, in which we create high and low priority queues using Linux Traffic Controller (TC).   The controller uses OpenFlow experimenter messages to communicate its prioritization decisions to the access point every 10 seconds.   The  communication and computational overheads of obtaining the state information from the clients over the wireless channel and execution of simulation-trained RL policies is minimal due to the low frequency of state updates. 

We perform the evaluations below in a laboratory setting with normal background WiFi traffic.  We verified using iPerf that the throughput limits that we set on TC are actually attained, i.e., the background WiFi traffic does not significantly affect performance.  In each of the settings below, each algorithm was run for a minimum of three hours for data collection.  We verified that the collected samples do not show much statistical difference for longer collection periods.   We also performed one test in an anechoic chamber, and found no significant variation in system performance.

\vspace{0.1in}
\noindent\textit{A. Does structured RL provide high-performance?} 

We first determine if structured RL approaches are near-optimal.  Figure \ref{fig:scenario1_1} shows the average QoE across three hours of YouTube sessions for all 7 policies. As expected, the centralized policies CH and CS do the best, with the decentralized versions DC, DC-T and I following close behind.  GB does reasonably well, but is unable to account for stalls influencing QoE, hence causing performance loss.  The vanilla approach shows why intelligent control is necessary, as it lags behind the RL approaches by full QoE unit, i.e., it is over 30\% worse.  Figure \ref{fig:scenario1_2_qoeCDF} shows the CDF of QoE samples, where the value of the RL  approach is even clearer from the fact that about 60-70\% of samples have a perfect score of QoE 5, while the other approaches are about 20\% worse.  Finally, Figure \ref{fig:scenario1_3} shows the average video buffer length, and appears to indicate that Greedy tries to equalize buffers, letting them go too low before prioritizing.  The RL-based policies all have higher average buffers, consistent with higher QoE and fewer stalls.

\vspace{0.1in}
\noindent\textit{B. Is indexing robust to a variable number of clients?} 

We next address whether the index policy I can be applied unaltered to the case where the number of clients is less than 6, which means that there is no need to train over a variable number of clients.  So we perform an experiment where we decrease number of clients from 6 to 4 over three hours.  As expected, the QoE of all three policies increases, with all three policies having an average QoE above 4 in Figure~\ref{fig:scenario2_1_qoe}. However, the overall trend is maintained as the Index policy does the best as seen in Figures \ref{fig:scenario2_1_qoe}, \ref{fig:scenario2_2_qoeCDF}, and \ref{fig:scenario2_3_buffer}.

\vspace{0.1in}
\noindent\textit{C. Does the index approach generalize to variable channels?} 

Wireless channels are subject to variability due to mobility and obstruction, and the question arises as to whether the RL approach needs to be trained separately over many channel realizations?  Since indexing simply orders clients based on their states, we wish to determine if it can be used unchanged even when the channel quality is low.  Hence, we introduce packet losses and delays using a network emulator to create channels with disturbances.  We first group clients with similar channels into clusters and divide the total available resources in a proportionally fair manner across clusters.  We then apply our service differentiation policies across members of each cluster.   We show results for a cluster seeing 10\% packet loss and 20 ms delay, and, as expected, QoE drops as seen in Figure~\ref{fig:scenario3_1_qoe}. However, the overall order still holds as Vanilla and Greedy Buffer have QoEs over 20\% worse than Index. Further, the client QoE CDF and average buffer state indicate same trend as seen in \ref{fig:scenario3_2_qoeCDF} and \ref{fig:scenario3_3_buffer}, with Indexing still performing significantly better than the others.

\section{Conclusion}

In this work, we investigated the optimization of resource allocation at the wireless edge for media streaming, particularly focusing on YouTube sessions, aiming to enhance users' quality of experience (QoE). Through a constrained Markov Decision Process (CMDP) framework and a data-driven approach using constrained reinforcement learning (CRL), we showed a threshold structure in optimal policies and developed a primal-dual natural policy gradient algorithm to efficiently learn such threshold policies. Specifically, we showed that soft-threshold parametrization coupled with natural policy gradient has a fast convergence rate, and only requires single neuron training.  Simulation results demonstrate the effectiveness of the approach, with decentralized single-client decomposition learning approximately four times faster than centralized learning while maintaining similar performance, while inference is completed in about four times less time. Implementing learned policies on an intelligent controller platform and evaluating them in real-world settings, the we showed significant QoE improvement of over 30\% compared to a vanilla policy, with the proposed index policy exhibiting robustness to varying load and channel conditions while maintaining high QoE levels. 

\textbf{Limitations:}  The proposed approach relies heavily on simulation environments, and so fine tuning or adaptation based on the exact streaming application used and its relevant QoE metrics (especially if different from YouTube), as well as the mobility and occlusion of the physical environment needs to be considered.  Additionally, the scalability and deployment feasibility of the proposed intelligent controller platform in large-scale networks warrant furthers exploration.  Overall, while our work presents valuable insights on the potential of intelligent control in enhancing user experience in wireless media streaming, addressing these limitations could enhance the practical relevance.


\bibliographystyle{ACM-Reference-Format}
\bibliography{refs}
\clearpage
\appendix 
\section{Appendix}

\subsection{Value Iteration Algorithm}
\label{sec:appendix-Value}

Value Iteration (VI) algorithm can be used to find the optimal value function of an MDP.  Define the Bellman operator $T$ as
\begin{equation*}
    TJ(s;\lambda) = \min_{a} \sum_{s'} p(s,a,s') [c(s,s') + \lambda g(a) + \gamma J(s';\lambda)],
\end{equation*}
for any value function $J$. Then, the optimal value function $J^*$ can be found using the iteration $J_{k+1} = T J_{k}$. It is well known that $J_{k} \rightarrow J^*$, starting from any arbitrary initial value $J_{0}$. The the optimal value function $J^*$ also satisfies the Bellman equation $J = TJ$. 

\subsection{Decentralized Optimal Policy}
\label{sec:theorem-1}
\begin{proof}[Proof of Theorem \ref{eqn:decentpolthm}]
As seen in the Equation~\eqref{eqn:saddlepoint}, the Lagrangian formulation for the joint CMDP decomposes into the sum of individual costs of each client $n$. Moreover, each client's optimal solution is given by state-action probabilities corresponding to a stationary randomized policy for its individual MDP. If either $\lambda = 0$, or the constraint is satisfied with equality, for this set of randomized policies for the clients, then, complimentary slackness is satisfied. Hence, $\lambda$ is optimal for the dual problem and the joint policy obtained as in the theorem statement is optimal for the primal. Hence, $(1)$ is proved.
The proof of Theorem~\eqref{eqn:decentpolthm} can be obtained by following in similar lines to~\cite{singh2019optimal}.
\end{proof}

\subsection{Proofs of Lemma \ref{lem:nondcr1} and Lemma \ref{lem:ifflemma1}}
\label{sec:append-lemma1-2}

We first provide the proof of Lemma \ref{lem:nondcr1}.

\begin{proof}[Proof of Lemma \ref{lem:nondcr1}]
With a slight abuse of notation, we omit the dependence of value function on $\lambda$. Let $J_{k}$ be the value function from the $k$-th iterate of the VI algorithm.
Define $DJ_k(s) = J_k(s) - J_k(s - e_x)$.
Let $s := (x+1,y)$, $s' := (x+2,y)$ for any given $y$,  we will show that $DJ_k(s') \ge DJ_k(s)$, $\forall x$, and $k \geq 0$ using the principle of mathematical induction and the properties of the  value iteration (VI) algorithm. Finally, the desired result follows as a convergence of VI algorithm.

Let  $J_0(s) = 0,~\forall x, y$, as standard in the VI.  So, $DJ_{0}(s') \ge DJ_{0}(s)$ by definition.

Now, as the induction hypothesis, for a given $k$, assume that  $DJ_k(s') \ge DJ_k(s), \forall x, y$.  We now show that this inequality holds for $k+1$ i.e., $DJ_{k+1}(s') \ge DJ_{k+1}(s).$ Equivalently,
\begin{equation} \label{eqn: eqn1}
2J_{k+1}(s) \leq J_{k+1}(s')+J_{k+1}(s-e_x). 
\end{equation}
Let $a_1,a_2$ be the minimizing actions in states $s'$ and $s - e_x$ respectively. Then, with a slight abuse of notation, we define the $k$-th action value function of state $s'' = (x,y)$ for a given action $a$ by $J_{k}(s'',a) := \sum_{s'''} p(s'',a,s''') [c(s'',s''') + \lambda g(a) + \gamma J_k(s''')],~\forall x,y $. Similarly, $DJ_k(s,a) := J_k(s,a) - J_k(s - e_x,a)$.  With this, the following is immediately clear.
\begin{align*}
2J_{k+1}(s) &\leq J_{k+1}(s,a_1) + J_{k+1}(s,a_2)\\
&= J_{k+1}(s',a_1) + J_{k+1}(s-e_x,a_2)\\
&+ J_{k+1}(s,a_1) - J_{k+1}(s',a_1)\\
&+ J_{k+1}(s,a_2) - J_{k+1}(s-e_x,a_2)\\
&= J_{k+1}(s') + J_{k+1}(s-e_x)\\
&+ DJ_{k+1}(s,a_2) - DJ_{k+1}(s',a_1)
\end{align*}
Let $B \triangleq DJ_{k+1}(s,a_2) - DJ_{k+1}(s',a_1)$. In view of equation~\eqref{eqn: eqn1}, it suffices to show that $B \leq 0$.
By definition we have,
\begin{align*}
\hspace{0mm} 
&DJ_{k+1}(s,a_2) =J_{k+1}(s,a_2) - J_{k+1}(s-e_x,a_2)\\
&= \alpha [c(s,\mathbf{0}) - c(s-e_x,\mathbf{0}) ] \\ 
&+ (1-\alpha) \left[P_1(a_2) [c(s,s) - c(s-e_x,s-e_x)+ \gamma DJ_k(s)]\right.\\
& \left. + P_2(a_2) [c(s,s') - c(s-e_x,(s-e_x)+e_x) \right. \\ 
&\left. +  \gamma [J_k(s') - J_k((s-e_x)+e_x)]] \right.\\ 
&  \left. +P_3(a_2) [c(s,s-e_x) - c(s-e_x,s-2e_x) + \right.\\
&\left.\gamma (J_k(s-e_x)- J_k(s-2e_x))]\right]\\
&\leq  \alpha [c(s,\mathbf{0}) - c(s-e_x,\mathbf{0}) ]+\\  
& (1-\alpha) \left[P_1(a_2) [c(s,s) - c(s-e_x,s-e_x)+ \gamma DJ_k(s)] +\right.\\
& \left. P_2(a_2) [c(s,s') - c(s-e_x,(s-e_x)+e_x)+ \gamma DJ_k(s')] + \right.\\ 
&  \left.P_3(a_2) [c(s,s-e_x) - c(s-e_x,s-2e_x) + \gamma DJ_k(s-e_x)]\right]
\end{align*}
where the last inequality is due to the fact that $J_k((s-e_x)+e_x) \ge J_k(s)$. This must be true, since the cost increases in $y$ for fixed buffer length, by assumption $A1$, and by assumption $A2$, that cost remains same during play period.\\
Similarly, we can write
\begin{align*}
\hspace{-3mm}
&DJ_{k+1}(s',a_1) = J_{k+1}(s',a_1) - J_{k+1}(s,a_1)\\
&\leq  \alpha [c(s',\mathbf{0}) - c(s,\mathbf{0}) ] \\ 
+(1-\alpha) &[ P_1(a_1) [c(s',s') - c(s,s) + \gamma DJ_k(s')] \\
&+  P_2(a_1) [c(s',s'+ e_x) -  c(s,s')+ 
 \gamma DJ_k(s'+ e_x)] \\ 
&+ P_3(a_1) [c(s',s) - c(s,s-e_x) + \gamma DJ_k(s)] ].
\end{align*}
Now, using these two expressions, and by assumptions on the cost structure $A[1-3]$ we obtain:
\begin{align*}
\hspace{0mm}
B &\leq
\gamma(1-\alpha) [P_1(a_2) DJ_k(s) - P_1(a_1) DJ_k(s') +\\
&\hspace{5mm} P_2(a_2) DJ_k(s') - P_2(a_1) DJ_k(s'+e_x) +\\ 
&\hspace{5mm} P_3(a_2) DJ_k(s-e_x) - P_3(a_1) DJ_k(s)]\\
&\leq \gamma(1-\alpha)[(DJ_k(s) - DJ_k(s')) (1-P_2(a_2) - P_3(a_1))\\
&+ P_2(a_1) (DJ_k(s') - DJ_k(s'+e_x))\\
&+ P_3(a_2) (DJ_k(s-e_x) - DJ_k(s))]\\
&\leq 0.
\end{align*}
The result follows from induction hypothesis and the fact that $1-P_2(a_2) - P_3(a_1) = 1 - [(1-\beta)\mu(a_2) + \beta(1 - \mu(a_1))]$ is non-negative for any choice of $a_2$ and $a_1$.

Finally, as described in Appendix~\ref{sec:appendix-Value}, since the VI Algorithm converges to the optimal value function, we have that $J_k \to J^*$, as ${k \to \infty}$. Hence, we have that $J^*(s;\lambda)-J^*(s-e_x;\lambda)$, is non decreasing in $x$, for any fixed $\lambda$, and for all $x,y$.
\end{proof}


We now prove additional lemmas from which the proof of Lemma~\ref{lem:ifflemma1} follows immediately.
\begin{lem}\label{lem:ifflemma}
The optimal action in state $s := (x,y)$ in the $(k+1)^{th}$ iteration of VI is to play action $2$, if and only if, 
$(1-\beta) [J_k(s + e_x;\lambda) - J_k(s;\lambda)] + \beta [J_k(s;\lambda) - J_k(s-e_x;\lambda)] \ge r$,
where $r = c_0-\frac{\lambda}{\gamma (1-\alpha) (\mu_1 - \mu_2)},$ with $c_0 = \frac{1}{\gamma} ((1-\beta) [c(s,s)-c(s,s+e_x)] + \beta [c(s,s-e_x) - c(s,s)])$.
\end{lem}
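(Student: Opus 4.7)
}
The plan is to compare the two action-value functions $J_{k+1}(s,a_1)$ and $J_{k+1}(s,a_2)$ obtained from one Bellman backup, and show that the inequality $J_{k+1}(s,a_2)\le J_{k+1}(s,a_1)$ is algebraically equivalent to the claimed inequality. Recall that $\bar c(s,s',a):=c(s,s')+\lambda g(a)$ and, using the four-way state transition described in the problem formulation,
\[
J_{k+1}(s,a)=\alpha\bigl[c(s,\mathbf{0})+\lambda g(a)+\gamma J_k(\mathbf{0})\bigr]+(1-\alpha)\!\!\sum_{i\in\{1,2,3\}}\!\!P_i(a)\bigl[c(s,s_i)+\lambda g(a)+\gamma J_k(s_i)\bigr],
\]
where $s_1=s,\ s_2=s+e_x,\ s_3=s-e_x$. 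Since $P_1(a)+P_2(a)+P_3(a)=1$, the $\lambda g(a)$ term collapses to $\lambda g(a)$, and the $\alpha$-reset term is action-independent.

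First I would subtract the two action-values and observe that (i) the entire $\alpha[c(s,\mathbf 0)+\gamma J_k(\mathbf 0)]$ term cancels, and (ii) the $\lambda$ contribution reduces to $\lambda(g(a_2)-g(a_1))=-\lambda$. What remains is
\[
J_{k+1}(s,a_2)-J_{k+1}(s,a_1)=-\lambda+(1-\alpha)\sum_{i=1}^3\bigl[P_i(a_2)-P_i(a_1)\bigr]V_i,
\qquad V_i:=c(s,s_i)+\gamma J_k(s_i).
\]
Next I would substitute the explicit formulas $P_2(a)=\mu(a)(1-\beta)$, $P_3(a)=(1-\mu(a))\beta$, $P_1(a)=1-P_2(a)-P_3(a)$ and write every difference in terms of $\Delta\mu:=\mu_1-\mu_2>0$: $P_2(a_2)-P_2(a_1)=-\Delta\mu(1-\beta)$, $P_3(a_2)-P_3(a_1)=\Delta\mu\,\beta$, and $P_1(a_2)-P_1(a_1)=\Delta\mu(1-2\beta)$.

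The main algebraic step is to split the awkward coefficient $(1-2\beta)$ on $V_1$ as $(1-\beta)-\beta$, which lets us regroup the bracketed sum as
\[
\Delta\mu\bigl\{-(1-\beta)[V_2-V_1]-\beta[V_1-V_3]\bigr\}.
\]
Plugging back in, the condition for action $2$ to be optimal in the $(k+1)$st iterate, $J_{k+1}(s,a_2)\le J_{k+1}(s,a_1)$, becomes
\[
(1-\alpha)\Delta\mu\bigl\{(1-\beta)[V_2-V_1]+\beta[V_1-V_3]\bigr\}\;\ge\;-\lambda.
\]
Dividing through by $(1-\alpha)\Delta\mu>0$, expanding $V_i$ and isolating the value-function differences, I would separate the cost-only and value-only pieces: the cost-only piece $(1-\beta)[c(s,s+e_x)-c(s,s)]+\beta[c(s,s)-c(s,s-e_x)]$ moves to the right-hand side, where $-1/\gamma$ times this expression is exactly the constant $c_0$ in the statement. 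Combining with $-\lambda/[\gamma(1-\alpha)\Delta\mu]$ yields the threshold level $r$, and the inequality in the statement of Lemma~\ref{lem:ifflemma} follows as an iff.

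The only real obstacle is bookkeeping: correctly tracking signs through the split $(1-2\beta)=(1-\beta)-\beta$ and making sure the cost differentials on the right-hand side reassemble into exactly $c_0$ as defined. Everything else is a direct application of the Bellman backup, and the passage from Lemma~\ref{lem:ifflemma} to Lemma~\ref{lem:ifflemma1} is obtained by letting $k\to\infty$ via the contraction of the Bellman operator described in Appendix~\ref{sec:appendix-Value}.
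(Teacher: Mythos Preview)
Your proposal is correct and follows essentially the same approach as the paper: both compare the two action-values in the Bellman backup, cancel the action-independent reset term, and reduce the difference to a linear combination of $V_2-V_1$ and $V_1-V_3$ weighted by $(1-\beta)$ and $\beta$. The only cosmetic difference is that the paper eliminates $P_1$ via $P_1=1-P_2-P_3$ before forming differences, whereas you first compute $P_1(a_2)-P_1(a_1)=\Delta\mu(1-2\beta)$ and then split $(1-2\beta)=(1-\beta)-\beta$; these are the same algebraic move.
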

\begin{proof}
The $(k+1)^{th}$ iterate of VI starting in state $s := (x,y)$ is given by:
\begin{align*}
&J_{k+1}(s) \\
&= \min_{a \in \{1,2\}} \{\lambda g(a) +  \sum_{\tilde{s}} p(s,a,\tilde{s}) [c(s,\tilde{s}) + \gamma J_k(\tilde{s})]\} \\
&= \min \{\lambda + \alpha c(s,\mathbf{0})+ \gamma \alpha J_k(\mathbf{0}) +  (1-\alpha)\\
& \hspace{19mm}  \left[ P_1(1) [c(s,s)+\gamma J_k(s)] + \right. \\
& \left. \hspace{20mm} P_2(1) [c(s,s+e_x) + \gamma J_k(s+e_x)] + \right.\\
&\left. \hspace{20mm} P_3(1) [c(s,s-e_x) + \gamma J_k(s-e_x)]  \right],\\
& \hspace{10mm}
\alpha c(s,\mathbf{0}) + \gamma \alpha J_k(\mathbf{0}) +  (1-\alpha)\\
& \hspace{14mm}  \left[ P_1(2) [c(s,s)+\gamma J_k(s)] + \right. \\
& \left. \hspace{15mm} P_2(2) [c(s,s+e_x) + \gamma J_k(s+e_x)] + \right.\\
&\left. \hspace{15mm} P_3(2) [c(s,s-e_x) + \gamma J_k(s-e_x)]  \right]\}.
\end{align*}
Clearly, the optimal action in state $x$ at $(n+1)^{th}$ iteration is $2$, if and only if:
\begin{align*}
& \lambda \ge  (1-\alpha) [ (P_2(2) - P_2(1)) [c(s,s+e_x)  - c(s,s)] +\\
& \hspace{10mm} \gamma (P_2(2) - P_2(1)) [ J_k(s+e_x) - J_k(s)] + \\
& \hspace{9mm}  (P_3(1)-P_3(2)) [c(s,s)-c(s,s-e_x)]+ \\
& \hspace{9mm} \gamma (P_3(1)-P_3(2)) [J_k(s)-J_k(s-e_x)]].
\end{align*}
Thus, we have,
\begin{align*}
&(1-\beta) [J_k(s+e_x) - J_k(s)] +\\
& \hspace{10mm} \beta [J_k(s) - J_k(s-e_x)] \ge r,
\end{align*}
where $r = c_0-\frac{\lambda}{\gamma (1-\alpha) (\mu_1 - \mu_2)},$ with $c_0 = \frac{1}{\gamma} ((1-\beta) [c(s,s)-c(s,s+e_x)] + \beta [c(s,s-e_x) - c(s,s)])$. 
\end{proof}
Similar to Lemma~\ref{lem:ifflemma}, we have the following lemma which gives us the necessary and sufficient condition to play action $1$. The proof is omitted, as it follows in similar lines from lemma~\ref{lem:ifflemma} .
\begin{lem}\label{lem:ifflemma2}
The optimal action in state $s := (x,y)$ in the $(k+1)^{th}$ iteration of VI is to play action $1$, if and only if, 
$(1-\beta) [J_k(s+e_x;\lambda) - J_k(s;\lambda)] + \beta [J_k(s;\lambda) - J_k(s-e_x;\lambda)] \leq r$,
where $r$ is as defined above.
\end{lem}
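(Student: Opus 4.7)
The plan is to mirror the derivation used for Lemma~\ref{lem:ifflemma} essentially verbatim, with the single change that we now ask when action $1$ (rather than action $2$) attains the $\min$ in the $(k{+}1)$th Bellman update, which simply reverses the sign of the inequality that emerges at the end of the calculation. Concretely, I would start from the same Bellman expression
\[
J_{k+1}(s) = \min_{a \in \{1,2\}}\!\Big\{\lambda g(a) + \alpha\, c(s,\mathbf{0}) + \gamma\alpha J_k(\mathbf{0}) + (1-\alpha)\!\sum_{i=1}^{3} P_i(a)\big[c(s,\tilde{s}_i) + \gamma J_k(\tilde{s}_i)\big]\Big\},
\]
where $\tilde{s}_1 = s,~\tilde{s}_2 = s+e_x,~\tilde{s}_3 = s - e_x$. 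Action $1$ is an optimizer iff the objective evaluated at $a=1$ is no larger than the objective evaluated at $a=2$.

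Next I would subtract the two objectives. The $\alpha$-termination contribution and all $s$-independent terms cancel; using $g(1)=1$, $g(2)=0$, $P_2(a) = \mu(a)(1-\beta)$, $P_3(a) = (1-\mu(a))\beta$, and $P_1(a) = 1 - P_2(a) - P_3(a)$, the residual difference factors as $(\mu_1-\mu_2)$ times a combination of the one-step cost increments and the value differences $J_k(s+e_x)-J_k(s)$ and $J_k(s) - J_k(s-e_x)$, weighted by $(1-\beta)$ and $\beta$ respectively. Action $1$ being optimal is then equivalent to
\[
\lambda + \gamma(1-\alpha)(\mu_1-\mu_2)\Big\{(1-\beta)\big[(J_k(s{+}e_x){-}J_k(s)) + \tfrac{1}{\gamma}(c(s,s{+}e_x){-}c(s,s))\big] + \beta\big[(J_k(s){-}J_k(s{-}e_x)) + \tfrac{1}{\gamma}(c(s,s){-}c(s,s{-}e_x))\big]\Big\} \le 0.
\]
Dividing through by $\gamma(1-\alpha)(\mu_1-\mu_2) > 0$ (which is positive by the assumption $\mu_{n,1} > \mu_{n,2}$) and rearranging yields exactly
\[
(1-\beta)[J_k(s{+}e_x;\lambda) - J_k(s;\lambda)] + \beta[J_k(s;\lambda) - J_k(s{-}e_x;\lambda)] \;\le\; c_0 - \tfrac{\lambda}{\gamma(1-\alpha)(\mu_1-\mu_2)} \;=\; r,
\]
with $c_0$ as in Lemma~\ref{lem:ifflemma1}. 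The ``only if'' and ``if'' directions are both embedded in the chain of equivalences, so there is nothing extra to verify.

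The only subtle point, and the one I would flag explicitly, is how to reconcile this with Lemma~\ref{lem:ifflemma}: at the knife-edge where the LHS equals $r$, both actions attain the $\min$, so both are ``optimal'' in the $\argmin$ sense, and accordingly both lemmas hold with weak inequalities. Off the knife-edge, exactly one of the two strict inequalities holds, picking out a unique minimizer. Aside from this bookkeeping, the argument contains no new ingredients beyond those already used to prove Lemma~\ref{lem:ifflemma}, so I do not expect any genuine obstacle.
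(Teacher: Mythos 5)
Your proposal is correct and matches the paper's intent exactly: the paper omits this proof, stating that it follows along the same lines as Lemma~\ref{lem:ifflemma}, and your derivation is precisely that computation with the roles of the two actions swapped, which reverses the direction of the final inequality after dividing by $\gamma(1-\alpha)(\mu_1-\mu_2)>0$. Your remark about the knife-edge case where both weak inequalities hold simultaneously is a correct and sensible clarification, consistent with both lemmas being stated with non-strict inequalities.
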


\begin{remark}
The above two lemmas establish that the greedy policy from each iteration of VI is of threshold type.
\end{remark}

Now, we  give the proof of Lemma \ref{lem:ifflemma1}.
\begin{proof}[Proof of Lemma \ref{lem:ifflemma1}]
The proof follows from the above two lemmas and the convergence of VI.
\end{proof}

\subsection{Proof of Corollary \ref{cor:politerate}}
\label{sec:append-cor1}
\begin{proof}[Proof of Corollary \ref{cor:politerate}]
    From lemma ~\ref{lem:npgupdateform}, we have  $\theta_{n,t+1}(s) = \theta_{n,t}(s) - \frac{\eta_1}{(1 - \gamma)} [A^{\pi_{\theta_{n,t}}}_{\lambda_t}(s,a_1) - A^{\pi_{\theta_{n,t}}}_{\lambda_t}(s,a_2)]$. By definition, for a given state $s$ the probability of taking action $a_1$,
    \begin{align*}
        &\pi_{\theta_{n,t+1}}(s,a_1) = \frac{e^{\theta_{n,t+1}(s)}}{1 + e^{\theta_{n,t+1}(s)}}.\\
        &= \frac{e^{\theta_{n,t}(s)} e^{\frac{-\eta_1 (A^{\pi_{\theta_{n,t}}}_{\lambda_t}(s,a_1)-A^{\pi_{\theta_{n,t}}}_{\lambda_t}(s,a_2))}{1-\gamma}}}{1 + e^{\theta_{n,t}(s)} e^{\frac{-\eta_1 (A^{\pi_{\theta_{n,t}}}_{\lambda_t}(s,a_1)-A^{\pi_{\theta_{n,t}}}_{\lambda_t}(s,a_2))}{1-\gamma}}}.\\
        &= \frac{e^{\theta_{n,t}(s)} e^{\frac{-\eta_1 A^{\pi_{\theta_{n,t}}}_{\lambda_t}(s,a_1)}{1-\gamma}}}{e^{\frac{-\eta_1 A^{\pi_{\theta_{n,t}}}_{\lambda_t}(s,a_2)}{1-\gamma}} + e^{\theta_{n,t}(s)} e^{\frac{-\eta_1 A^{\pi_{\theta_{n,t}}}_{\lambda_t}(s,a_1)}{1-\gamma}}}\\
        &= \frac{ \pi_{\theta_{n,t}}(s,a_1) e^{\frac{-\eta_1 A^{\pi_{\theta_{n,t}}}_{\lambda_t}(s,a_1)}{1-\gamma}}}{ \pi_{\theta_{n,t}}(s,a_2) e^{\frac{-\eta_1 A^{\pi_{\theta_{n,t}}}_{\lambda_t}(s,a_2)}{1-\gamma}} + \pi_{\theta_{n,t}}(s,a_1) e^{\frac{-\eta_1 A^{\pi_{\theta_{n,t}}}_{\lambda_t}(s,a_1)}{1-\gamma}}}\\
        &= \pi_{\theta_{n,t}}(s,a_1) \frac{\exp(\frac{-\eta_1}{1-\gamma}A^{\pi_{\theta_{n,t}}}_{\lambda_t}(s,a_1))}{Z_t(s)},\\
    \end{align*}
where $Z_t(s) \triangleq \sum_{a \in \cA} \pi_{\theta_{n,t}}(s,a) \exp(\frac{-\eta_1}{1-\gamma}A^{\pi_{\theta_{n,t}}}_{\lambda_t}(s,a))$.
The fourth equality follows from normalization with $1 + e^{\theta_{n,t}(s)}$. Similarly, the policy update can be derived for action $a_2$.
\end{proof}

\subsection{Convergence of NPG}
\label{sec:convlogregobj}
In this section, we present the proof of theorem~\eqref{thm:finiteconvthm}. We first prove the necessary supporting lemmas which will be used in the later parts of the proof. To this effect, we define shorthand notation for quantities of interest useful in our analysis. For the ease of notation, we drop the client subscript $n$ and use it explicitly when needed.

We denote the discounted state visitation distribution of a policy $\pi_{\theta}$ starting at initial state $s$ as:
\begin{align}
d_{s}^{\pi_{\theta}}(s') \triangleq (1-\gamma) \sum_{t \geq 0} \gamma^t p^{\pi_{\theta}}(s(t) = s'|s(0) = s)
\end{align}
where $p^{\pi_{\theta}}(s(t) = s'| s(0) = s)$ is the probability that $s(t) = s'$ starting at an initial state $s(0) = s$, under the policy $\pi_{\theta}$.
Accordingly, we denote the discounted state visitation distribution for an initial distribution $\rho$ by
\begin{align}
d_{\rho}^{\pi_{\theta}}(s') \triangleq \E_{s \sim \rho} [(1-\gamma) \sum_{t \geq 0} \gamma^t p^{\pi_{\theta}}(s(t) = s'| s(0) = s)]
\end{align}

\begin{lem}[Improvement Lemma]\label{eqn:ImprovementLemma}
The policy iterates $\pi_{\theta_t}$ generated by algorithm~\ref{eqn:pseudoNPGalgo2} satisfy
\begin{align*}
    &J_c^{\pi_{\theta_t}}(\rho) - J_c^{\pi_{\theta_{t+1}}}(\rho)\\
    &+ \lambda_t (J_g^{\pi_{\theta_t}}(\rho) - J_g^{\pi_{\theta_{t+1}}}(\rho)) \geq \frac{1 - \gamma}{\eta_1} \E_{s \sim \rho} \log Z_t(s). \numberthis
\end{align*}
and $\E_{s \sim \rho} \log Z_t(s) \geq 0$ for any initial state distribution $\rho$.
\end{lem}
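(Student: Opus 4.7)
The plan is to combine the standard performance difference lemma with the closed-form policy update from Corollary~\ref{cor:politerate}. View the combined quantity $J_c + \lambda_t J_g$ as the value function of a single MDP whose per-step cost is $c(s,s') + \lambda_t g(a)$ with associated advantage $A^{\pi_{\theta_t}}_{\lambda_t}(s,a) = A_c^{\pi_{\theta_t}}(s,a) + \lambda_t A_g^{\pi_{\theta_t}}(s,a)$. Applying the performance difference lemma to the pair $(\pi_{\theta_t},\pi_{\theta_{t+1}})$ against this composite cost gives
\begin{align*}
&J_c^{\pi_{\theta_t}}(\rho) - J_c^{\pi_{\theta_{t+1}}}(\rho) + \lambda_t\bigl(J_g^{\pi_{\theta_t}}(\rho) - J_g^{\pi_{\theta_{t+1}}}(\rho)\bigr)\\
&\quad = -\frac{1}{1-\gamma}\,\E_{s \sim d^{\pi_{\theta_{t+1}}}_\rho}\sum_{a} \pi_{\theta_{t+1}}(a|s)\, A^{\pi_{\theta_t}}_{\lambda_t}(s,a).
\end{align*}

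Next I would eliminate the advantage using Corollary~\ref{cor:politerate}. Taking logarithms in the update $\pi_{\theta_{t+1}}(a|s) = \pi_{\theta_t}(a|s)\exp\!\bigl(-\tfrac{\eta_1}{1-\gamma}A^{\pi_{\theta_t}}_{\lambda_t}(s,a)\bigr)/Z_t(s)$ yields the identity
\begin{equation*}
A^{\pi_{\theta_t}}_{\lambda_t}(s,a) \;=\; -\frac{1-\gamma}{\eta_1}\Bigl[\log\frac{\pi_{\theta_{t+1}}(a|s)}{\pi_{\theta_t}(a|s)} + \log Z_t(s)\Bigr].
\end{equation*}
Substituting this back and recognizing $\sum_a \pi_{\theta_{t+1}}(a|s)\log(\pi_{\theta_{t+1}}(a|s)/\pi_{\theta_t}(a|s))$ as the KL divergence $\mathrm{KL}(\pi_{\theta_{t+1}}(\cdot|s)\,\|\,\pi_{\theta_t}(\cdot|s))$, while $\log Z_t(s)$ is $a$-independent and so comes out of the inner sum, reduces the right-hand side to $\tfrac{1}{\eta_1}\,\E_{s \sim d^{\pi_{\theta_{t+1}}}_\rho}[\mathrm{KL}(\pi_{\theta_{t+1}}(\cdot|s)\,\|\,\pi_{\theta_t}(\cdot|s)) + \log Z_t(s)]$. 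Nonnegativity of the KL term immediately gives the first bound $\tfrac{1}{\eta_1}\,\E_{s \sim d^{\pi_{\theta_{t+1}}}_\rho}[\log Z_t(s)]$.

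To bridge from the $d^{\pi_{\theta_{t+1}}}_\rho$-expectation to a $\rho$-expectation, I would first prove the auxiliary claim $\log Z_t(s) \geq 0$ pointwise. Since $Z_t(s) = \E_{a \sim \pi_{\theta_t}(\cdot|s)}[\exp(-\tfrac{\eta_1}{1-\gamma}A^{\pi_{\theta_t}}_{\lambda_t}(s,a))]$, Jensen's inequality for the convex function $\exp$ together with the well-known identity $\E_{a\sim\pi_{\theta_t}(\cdot|s)}[A^{\pi_{\theta_t}}_{\lambda_t}(s,a)] = 0$ yields $Z_t(s) \geq \exp(0) = 1$, hence $\log Z_t(s) \geq 0$. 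This settles the second assertion of the lemma and in particular the $\rho$-expectation is nonnegative. Now, armed with pointwise nonnegativity of $\log Z_t(s)$, the standard bound $d^{\pi}_\rho(s) \geq (1-\gamma)\rho(s)$ allows us to swap measures and pick up exactly the factor $(1-\gamma)/\eta_1$ required by the statement. The main subtlety will be ensuring that all logarithms and KL divergences are well-defined, which holds because the soft-threshold parametrization sends every $(s,a)$ to a strictly positive probability in $(0,1)$; otherwise the argument is a clean chain of substitutions and two applications of Jensen.
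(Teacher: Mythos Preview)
Your proposal is correct and follows essentially the same route as the paper: performance difference lemma, substitution of the multiplicative-weights update from Corollary~\ref{cor:politerate} to expose a KL term plus $\log Z_t$, Jensen's inequality to show $\log Z_t(s)\ge 0$, and the bound $d^{\pi}_\rho(s)\ge(1-\gamma)\rho(s)$ to pass from $d^{\pi_{\theta_{t+1}}}_\rho$ to $\rho$. The only cosmetic difference is that you apply the performance difference lemma once to the composite Lagrangian value function, whereas the paper applies it separately to $J_c$ and $J_g$ and then recombines; your version is slightly cleaner but otherwise identical in substance.
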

\begin{proof}
From the performance difference lemma~\cite[Lemma $2$]{agarwal2021theory} we have,
\begin{align*}
&J_c^{\pi_{\theta_{t+1}}}(\rho) - J_c^{\pi_{\theta_t}}(\rho)
 = \frac{\E_{s \sim d_{\rho}^{\pi_{\theta_{t+1}}}, a \sim  \pi_{\theta_{t+1}}(s, \cdot)} [A_{c}^{\pi_{\theta_t}}(s,a)]}{1-\gamma}\\
&= \frac{1}{1-\gamma} \E_{s \sim d_{\rho}^{\pi_{\theta_{t+1}}}} \sum_{a \in \cA}  \pi_{\theta_{t+1}}(s, a) [A_{c}^{\pi_{\theta_t}}(s,a)]\\
&= \frac{1}{\eta_1} \E_{s \sim d_{\rho}^{\pi_{\theta_{t+1}}}} \sum_{a \in \cA}  \pi_{\theta_{t+1}}(s, a) \log(\frac{\pi_{\theta_t}(s,a)}{\pi_{\theta_{t+1}}(s,a)Z_t(s)}) \nonumber\\
&- \frac{\lambda_t}{1-\gamma} \E_{s \sim d_{\rho}^{\pi_{\theta_{t+1}}}} \sum_{a \in \cA}  \pi_{\theta_{t+1}}(s, a) A_{g}^{\pi_{\theta_t}}(s,a)\\
&= \frac{1}{\eta_1} \E_{s \sim d_{\rho}^{\pi_{\theta_{t+1}}}} \sum_{a \in \cA}  \pi_{\theta_{t+1}}(s, a) \log(\frac{\pi_{\theta_t}(s,a)}{\pi_{\theta_{t+1}}(s,a)Z_t(s)}) \nonumber\\
&\quad- \lambda_t ( J_g^{\pi_{\theta_{t+1}}}(\rho) - J_g^{\pi_{\theta_t}}(\rho)).
\end{align*}
The last equality comes from performance difference lemma~\cite[Lemma $2$]{agarwal2021theory} with respect to the discounted utility function. Rewriting the expression we have,
\begin{align*}
&J_c^{\pi_{\theta_t}}(\rho) - J_c^{\pi_{\theta_{t+1}}}(\rho) + \lambda_t (J_g^{\pi_{\theta_t}}(\rho) - J_g^{\pi_{\theta_{t+1}}}(\rho)) \\
 &= \frac{1}{\eta_1} \E_{s \sim d_{\rho}^{\pi_{\theta_{t+1}}}} \sum_{a \in \cA}  \pi_{\theta_{t+1}}(s, a) \log(\frac{\pi_{\theta_{t+1}}(s,a)Z_t(s)}{\pi_{\theta_t}(s,a)}) \\
 &= \frac{1}{\eta_1} \E_{s \sim d_{\rho}^{\pi_{\theta_{t+1}}}}  D_{KL} (\pi_{\theta_{t+1}} || \pi_{\theta_t}) + \frac{1}{\eta_1} \E_{s \sim d_{\rho}^{\pi_{\theta_{t+1}}}} \log Z_t(s) \\
 &\geq \frac{1}{\eta_1} \E_{s \sim d_{\rho}^{\pi_{\theta_{t+1}}}} \log Z_t(s)
\end{align*}
The last inequality follows from the fact that KL-divergence is non-negative. Therefore, it suffices to show that $\log Z_t(s) \geq 0$. From the definition of $Z_t(s)$ and the concavity of the log function we have,
\begin{align*}
    \log Z_t(s) &= \log (\sum_{a \in \cA} \pi_{\theta_t}(s,a) \exp(\frac{-\eta_1}{1-\gamma}A_{\lambda_t}(s,a)))\\
                &\geq \sum_{a \in \cA} \pi_{\theta_t}(s,a) \log \exp(\frac{-\eta_1}{1-\gamma}A_{\lambda_t}(s,a))\\
                &\geq \frac{-\eta_1}{1-\gamma} \sum_{a \in \cA} \pi_{\theta_t}(s,a) (A_r^{\pi_{\theta_t}}(s,a) + \lambda_t A_g^{\pi_{\theta_t}}(s,a))\\
                &= 0.
\end{align*}
The last equality uses the fact that the expected advantage in any state $s$ is $0$. Finally, using the fact that $d_{\rho}^{\pi_{\theta_{t+1}}}(s) \geq \rho(s) (1 - \gamma)~ \forall s \in \cS$, we have the result.
\end{proof}

Let $\pi^*$ be the optimal policy.
\begin{lem}[Bounded Average performance]\label{lem:boundedavgperf}
     The policy iterates $\pi_{\theta_t}$, and the dual variable iterates $\lambda_t$ of algorithm~\ref{eqn:pseudoNPGalgo2} satisfy
    \begin{align*}
        \frac{1}{T} \sum_{t = 0}^{T-1} J_c^{\pi_{\theta_t}}(\rho) - J_c^{\pi^*}(\rho) + \frac{1}{T} \sum_{t = 0}^{T-1} \lambda_t (J_g^{\pi_{\theta_t}}(\rho) - J_g^{\pi^*}(\rho))\\
        \leq \frac{\log |\cA|}{\eta_1 T} + \frac{1}{(1-\gamma)^2T} + \frac{N\eta_2}{(1-\gamma)^3}. \numberthis
    \end{align*}
\end{lem}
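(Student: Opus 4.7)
\textbf{Proof plan for Lemma \ref{lem:boundedavgperf}.}

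The plan is to mimic the ``mirror-descent style'' analysis for natural policy gradient, but carried out with respect to the \emph{Lagrangian} value function so that the time-varying multiplier $\lambda_t$ can be tracked. First I would apply the performance difference lemma of Kakade--Langford at the comparator $\pi^\star$ with starting distribution $\rho$: for each $t$,
\begin{equation*}
J_c^{\pi_{\theta_t}}(\rho) - J_c^{\pi^\star}(\rho) + \lambda_t\bigl(J_g^{\pi_{\theta_t}}(\rho) - J_g^{\pi^\star}(\rho)\bigr) \;=\; -\frac{1}{1-\gamma}\,\E_{s\sim d_\rho^{\pi^\star}}\sum_{a\in\cA}\pi^\star(s,a)\,A_{\lambda_t}^{\pi_{\theta_t}}(s,a).
\end{equation*}
This converts a quantity depending on the iterate's own state distribution into one controlled through $d_\rho^{\pi^\star}$, which is the distribution at which the KL-telescoping argument will be carried out.

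Next, I would use the explicit multiplicative form of the NPG update from Corollary~\ref{cor:politerate} to rewrite the advantage as a log policy ratio:
\begin{equation*}
\frac{\eta_1}{1-\gamma}A_{\lambda_t}^{\pi_{\theta_t}}(s,a) \;=\; \log \pi_{\theta_t}(s,a) - \log \pi_{\theta_{t+1}}(s,a) - \log Z_t(s).
\end{equation*}
Substituting into the previous identity and summing $a$ against $\pi^\star(s,\cdot)$ produces the KL telescoping identity
\begin{equation*}
J_c^{\pi_{\theta_t}}(\rho) - J_c^{\pi^\star}(\rho) + \lambda_t\bigl(J_g^{\pi_{\theta_t}}(\rho) - J_g^{\pi^\star}(\rho)\bigr) \;=\; \frac{1}{\eta_1}\,\E_{s\sim d_\rho^{\pi^\star}}\Bigl[D_{KL}(\pi^\star\|\pi_{\theta_t}) - D_{KL}(\pi^\star\|\pi_{\theta_{t+1}}) + \log Z_t(s)\Bigr].
\end{equation*}
Summing over $t=0,\dots,T-1$ telescopes the KL terms, leaving $\frac{1}{\eta_1}\E_{s\sim d_\rho^{\pi^\star}}D_{KL}(\pi^\star\|\pi_{\theta_0})$, which is at most $\frac{\log|\cA|}{\eta_1}$ because $\theta_0 = \mathbf{0}$ makes $\pi_{\theta_0}$ uniform.

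All that remains is to upper bound $\sum_{t=0}^{T-1}\E_{s\sim d_\rho^{\pi^\star}}\log Z_t(s)$. Here I would invoke the Improvement Lemma~\ref{eqn:ImprovementLemma} with the initial distribution taken to be $d_\rho^{\pi^\star}$, which yields
\begin{equation*}
\E_{s\sim d_\rho^{\pi^\star}}\log Z_t(s) \;\le\; \frac{\eta_1}{1-\gamma}\Bigl[\bigl(J_c^{\pi_{\theta_t}} - J_c^{\pi_{\theta_{t+1}}}\bigr)(d_\rho^{\pi^\star}) + \lambda_t\bigl(J_g^{\pi_{\theta_t}} - J_g^{\pi_{\theta_{t+1}}}\bigr)(d_\rho^{\pi^\star})\Bigr].
\end{equation*}
The $J_c$-piece telescopes directly into $J_c^{\pi_{\theta_0}} - J_c^{\pi_{\theta_T}}$, bounded by $1/(1-\gamma)$ using boundedness of the per-step cost. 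For the $\lambda_t$-piece, which does not telescope because of the time-varying multiplier, I would apply summation by parts:
\begin{equation*}
\sum_{t=0}^{T-1}\lambda_t\bigl(x_t - x_{t+1}\bigr) \;=\; \lambda_0 x_0 - \lambda_{T-1} x_T + \sum_{t=1}^{T-1}(\lambda_t - \lambda_{t-1})x_t,
\end{equation*}
with $x_t = J_g^{\pi_{\theta_t}}(d_\rho^{\pi^\star})\in[0,1/(1-\gamma)]$. Using $\lambda_0 = 0$, $\lambda_{T-1}x_T\ge 0$, and the one-step dual bound $|\lambda_t - \lambda_{t-1}| \le \eta_2 N/(1-\gamma)$ (from the dual update, since $|\sum_n J_g^{\pi_{\theta_{n,t}}}(\rho_n) - \bar K|\le N/(1-\gamma)$), this sum is at most $T\eta_2 N/(1-\gamma)^2$.

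Combining everything, dividing through by $T$, and multiplying through by $1/\eta_1$ gives exactly the three terms $\log|\cA|/(\eta_1 T)$, $1/((1-\gamma)^2 T)$, and $N\eta_2/(1-\gamma)^3$ in the stated bound. The main obstacle, and the only genuinely novel manipulation beyond standard unconstrained NPG analysis, is the Abel summation on the $\lambda_t$-weighted telescoping sum: one must simultaneously exploit $\lambda_0 = 0$, non-negativity of $\lambda_{T-1}x_T$, and the one-step Lipschitz bound on $\lambda_t$, being careful that the $N$ in the dual update is the only place an explicit $N$ enters (aside from the rescaled $\bar K/N$ in the Lagrangian itself), which is what produces the $N\eta_2/(1-\gamma)^3$ scaling that later governs the constraint violation rate in Theorem~\ref{thm:finiteconvthm}.
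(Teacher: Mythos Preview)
Your proposal is correct and follows essentially the same approach as the paper's proof: performance difference lemma at $\pi^\star$, substitution of the multiplicative NPG update to get the KL telescoping identity, the Improvement Lemma applied with initial distribution $d_\rho^{\pi^\star}$ to control $\sum_t \log Z_t$, and Abel summation on the $\lambda_t$-weighted $J_g$ differences using $\lambda_0=0$, nonnegativity of the boundary term, and the one-step dual Lipschitz bound $|\lambda_t-\lambda_{t-1}|\le \eta_2 N/(1-\gamma)$. The bookkeeping and the resulting three terms match exactly.
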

\begin{proof}
Using the performance difference lemma now with respect to the optimal policy $\pi^*$,
\begin{align*}
 &J_c^{\pi^*}(\rho) -  J_c^{\pi_{\theta_t}}(\rho) = \frac{1}{1-\gamma} \E_{s \sim d_{\rho}^{\pi^*}} \sum_{a \in \cA}  \pi^*(s, a) A_{r}^{\pi_{\theta_t}}(s,a)\\
 &= \frac{1}{\eta_1} \E_{s \sim d_{\rho}^{\pi^*}} \sum_{a \in \cA}  \pi^* (s, a) \log(\frac{\pi_{\theta_t}(s,a)}{\pi_{\theta_{t+1}}(s,a)Z_t(s)})\\
 &- \frac{\lambda_t}{1-\gamma} \E_{s \sim d_{\rho}^{\pi^*}} \sum_{a \in \cA}  \pi^*(s, a) A_{g}^{\pi_{\theta_t}}(s,a)\\
 &= \frac{1}{\eta_1} \E_{s \sim d_{\rho}^{\pi^*}} [ D_{KL} (\pi^* || \pi_{\theta_{t+1}}) - D_{KL} (\pi^* || \pi_{\theta_t})]\\
 &- \frac{1}{\eta_1} \E_{s \sim d_{\rho}^{\pi^*}} \log Z_t(s) - \lambda_t (J_g^{\pi^*}(\rho) -  J_g^{\pi_{\theta_t}}(\rho)).
 \end{align*}
 Rearranging the terms of above expression and taking the summation over $T$ iterations we have,
\begin{align*}
    &\frac{1}{T} \sum_{t = 0}^{T-1} J_c^{\pi_{\theta_t}}(\rho) - J_c^{\pi^*}(\rho)\\
    &= \frac{1}{\eta_1T} \sum_{t = 0}^{T-1} \E_{s \sim d_{\rho}^{\pi^*}} [ D_{KL} (\pi^* || \pi_{\theta_{t}}) - D_{KL} (\pi^* || \pi_{\theta_{t+1}})]\\
    &+ \frac{1}{\eta_1T} \sum_{t = 0}^{T-1}\E_{s \sim d_{\rho}^{\pi^*}} \log Z_t(s) - \frac{1}{T} \sum_{t = 0}^{T-1} \lambda_t (J_g^{\pi_{\theta_t}}(\rho) - J_g^{\pi^*}(\rho))
\end{align*}
With a slight abuse of notation, using the optimal state visitation distribution denoted by $\rho^* \triangleq d_{\rho}^{\pi^*}$, from the improvement lemma~\ref{eqn:ImprovementLemma} it follows that
\begin{align}
    &J_c^{\pi_{\theta_t}}(\rho^*) - J_c^{\pi_{\theta_{t+1}}}(\rho^*) + \lambda_t (J_g^{\pi_{\theta_t}}(\rho^*) - J_g^{\pi_{\theta_{t+1}}}(\rho^*)) \nonumber\\
    &\geq \frac{1 - \gamma}{\eta_1} \E_{s \sim \rho^*} \log Z_t(s).
\end{align}

Using the above inequality, we upper bound the average performance difference as follows
\begin{align*}
    &\frac{1}{T} \sum_{t = 0}^{T-1} J_c^{\pi_{\theta_t}}(\rho) - J_c^{\pi^*}(\rho)\\
    &\leq \frac{1}{\eta_1T} \sum_{t = 0}^{T-1} \E_{s \sim \rho^*} [ D_{KL} (\pi^* || \pi_{\theta_{t}}) - D_{KL} (\pi^* || \pi_{\theta_{t+1}})]\\ 
    &+ \frac{1}{(1-\gamma)T} \sum_{t = 0}^{T-1} J_c^{\pi_{\theta_t}}(\rho^*) - J_c^{\pi_{\theta_{t+1}}}(\rho^*)\\
    &+ \frac{\lambda_t }{(1-\gamma)T} \sum_{t = 0}^{T-1}(J_g^{\pi_{\theta_t}}(\rho^*) - J_g^{\pi_{\theta_{t+1}}}(\rho^*))\\
    &- \frac{1}{T} \sum_{t = 0}^{T-1} \lambda_t (J_g^{\pi_{\theta_t}}(\rho) - J_g^{\pi^*}(\rho))\\
    &\leq \frac{1}{\eta_1T}  \E_{s \sim \rho^*} D_{KL} (\pi^* || \pi_{\theta_{0}}) + \frac{1}{(1-\gamma)T} J_c^{\pi_{\theta_0}}(\rho^*)\\ 
    &+ \frac{1}{(1-\gamma)T} \sum_{t = 0}^{T-1} \lambda_t (J_g^{\pi_{\theta_t}}(\rho^*) - J_g^{\pi_{\theta_{t+1}}}(\rho^*))\\
    &- \frac{1}{T} \sum_{t = 0}^{T-1} \lambda_t (J_g^{\pi_{\theta_t}}(\rho) - J_g^{\pi^*}(\rho))\\
    &\leq \frac{1}{\eta_1T}  \E_{s \sim \rho^*} D_{KL} (\pi^* || \pi_{\theta_{0}}) + \frac{1}{(1-\gamma)T} J_c^{\pi_{\theta_0}}(\rho^*)\\
    &+ \frac{1}{(1-\gamma)T} \sum_{t = 1}^{T-1} (\lambda_t - \lambda_{t-1})J_g^{\pi_{\theta_t}}(\rho^*) + \frac{\lambda_0}{(1-\gamma)T}J_g^{\pi_{\theta_0}}(\rho^*)\\
    &- \frac{1}{T} \sum_{t = 0}^{T-1} \lambda_t (J_g^{\pi_{\theta_t}}(\rho) - J_g^{\pi^*}(\rho))\\
    &\leq  \frac{\log |\cA|}{\eta_1 T} + \frac{1}{(1-\gamma)^2T} + \frac{\eta_2N}{(1-\gamma)^3} \nonumber\\
    &- \frac{1}{T} \sum_{t = 0}^{T-1} \lambda_t (J_g^{\pi_{\theta_t}}(\rho) - J_g^{\pi^*}(\rho)).
\end{align*}
The second inequality follows from evaluating the telescoping terms and dropping the negative terms. Finally, in the last inequality, we use the fact that the KL divergence $D_{KL}(p || q)$ where $p,q \in \triangle_{\cA}$ and $q$ is the uniform distribution is upper bounded by $\log \abs{\cA};$ the value function is upper bounded by $\frac{1}{1-\gamma}$ since costs are bounded; from the dual update in algo~\eqref{eqn:pseudoNPGalgo2} we have $\abs{\lambda_t - \lambda_{t-1}} \leq \frac{\eta_2N}{1-\gamma}$. 
\end{proof}

We are now ready to prove theorem~\eqref{thm:finiteconvthm}. We first prove the optimality gap, and then prove a series of supporting lemmas from which the constraint violation gap is derived thereafter. Note that the above technical lemmas are true for each client $n$. Using the shorthand notation for the cumulative objective functions $\bar{J}_c^{\pi} \triangleq \sum_{n=1}^N J_c^{\pi_n}(\rho_n), \bar{J}_g^{\pi} \triangleq \sum_{n=1}^N J_g^{\pi_n}(\rho_n)$, $\bar{J}_{\lambda}^{\pi} \triangleq \bar{J}_c^{\pi} + \lambda (\bar{J}_g^{\pi} - \bar{K})$.\\
\begin{proof}[Bounding the optimality gap in theorem~\eqref{thm:finiteconvthm}]

From Lemma~\eqref{lem:boundedavgperf}, for each client $n$ we have,
    \begin{align*}
    \frac{1}{T} \sum_{t = 0}^{T-1} J_c^{\pi_{\theta_{n,t}}}(\rho) - J_c^{\pi_n^*}(\rho) &\leq  \frac{\log |\cA|}{\eta_1 T} + \frac{1}{(1-\gamma)^2T}\\ 
    + \frac{\eta_2N}{(1-\gamma)^3}
    &- \frac{1}{T} \sum_{t = 0}^{T-1} \lambda_t (J_g^{\pi_{\theta_{n,t}}}(\rho) - J_g^{\pi_n^*}(\rho))
    \end{align*}
Summing over all clients,
    \begin{align}\label{eqn:bdavgperfineql}
    &\frac{1}{T} \sum_{t = 0}^{T-1} \sum_{n = 1}^N J_c^{\pi_{\theta_{n,t}}}(\rho_n) - J_c^{\pi_n^*}(\rho_n)
    \leq  \frac{N\log |\cA|}{\eta_1 T} + \frac{N}{(1-\gamma)^2T} \nonumber\\ 
    &+ \frac{\eta_2N^2}{(1-\gamma)^3}
    - \frac{1}{T} \sum_{t = 0}^{T-1} \sum_{n = 1}^N \lambda_t (J_g^{\pi_{\theta_{n,t}}}(\rho_n) - J_g^{\pi_n^*}(\rho_n))
    \end{align}
To derive the optimality gap, it suffices to upper bound the last term in above inequality. From the dual update in the algorithm~\ref{eqn:pseudoNPGalgo2},
\begin{align*}
&0 \leq \lambda_T^2 = \sum_{t = 0}^{T-1} (\lambda_{t+1}^2 - \lambda_t^2)\\
&= \sum_{t = 0}^{T-1} (\cP_{\Lambda}(\lambda_t + \eta_2 (\sum_{n = 1}^N J_g^{\pi_{\theta_{n,t}}}(\rho_n) - \bar{K})))^2 - \lambda_t^2\\
&\leq \sum_{t = 0}^{T-1} (\lambda_t + \eta_2(\sum_{n = 1}^NJ_g^{\pi_{\theta_{n,t}}}(\rho_n) - \bar{K}))^2 - \lambda_t^2\\
&\leq 2\eta_2 \sum_{t = 0}^{T-1} \lambda_t(\sum_{n = 1}^NJ_g^{\pi_{\theta_{n,t}}}(\rho_n) - \bar{K})\\
&+ \eta_2^2 \sum_{t = 0}^{T-1}(\sum_{n = 1}^NJ_g^{\pi_{\theta_{n,t}}}(\rho_n) - \bar{K})^2\\
&\leq 2\eta_2 \sum_{t = 0}^{T-1} \sum_{n = 1}^N \lambda_t(J_g^{\pi_{\theta_{n,t}}}(\rho_n) -  J_g^{\pi_n^*}(\rho_n)) +  \frac{\eta_2^2TN^2}{(1-\gamma)^2}.
\end{align*}
Therefore we have,
\begin{equation}
 - \frac{1}{T} \sum_{t = 0}^{T-1} \sum_{n = 1}^N \lambda_t (J_g^{\pi_{\theta_{n,t}}}(\rho_n) - J_g^{\pi_n^*}(\rho_n)) \leq  \frac{\eta_2N^2}{2(1-\gamma)^2}.
\end{equation}
The second inequality follows from the dual update in ~\eqref{eqn:pseudoNPGalgo2}, and using the upper bound for the value function as before. Finally, for $\eta_1 = \log |\cA|$ and $\eta_2 = \frac{1-\gamma}{N\sqrt{T}}$, we have
\begin{align*}
&\frac{1}{T} \sum_{t = 0}^{T-1} \sum_{n = 1}^N J_c^{\pi_{\theta_{n,t}}}(\rho_n) - J_c^{\pi_n^*}(\rho_n)\\
&\leq  \frac{N\log |\cA|}{\eta_1 T} + \frac{N}{(1-\gamma)^2T} + \frac{\eta_2N^2}{(1-\gamma)^3} + \frac{\eta_2N^2}{2(1-\gamma)^2}.\\
&\leq  \frac{N\log |\cA|}{\eta_1 T} + \frac{N}{(1-\gamma)^2T} + \frac{2\eta_2N^2}{(1-\gamma)^3}.\\
&\leq  \frac{N}{T} + \frac{N}{(1-\gamma)^2T} + \frac{2N}{(1-\gamma)^2 \sqrt{T}} \leq \frac{4N}{(1-\gamma)^2 \sqrt{T}}.
\end{align*}
\end{proof}

\begin{lem}\label{lem:boundoptimaldual}
Given that assumption~\eqref{eqn:slatercond} holds true. Then $0 \leq \lambda^* \leq \frac{(\bar{J}_c^{\bar{\pi}} - \bar{J}_c^{\pi^*})}{\xi}$.
\end{lem}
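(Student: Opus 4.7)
The plan is to prove this classical Slater-based bound on the optimal dual variable by combining strong duality with the strict feasibility guarantee of $\bar\pi$. Non-negativity $\lambda^*\ge 0$ is immediate from the constraint $\lambda^*\in\R_+$ in the definition of the dual problem, so the work is in establishing the upper bound.

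First I would invoke strong duality, which is guaranteed by Slater's condition (Assumption~\ref{eqn:slatercond}) and was already stated earlier in the paper as $D(\lambda^*)=\bar J_c^{\pi^*}$. Next I would use the defining minimization in the dual function: since $D(\lambda^*)=\min_{\pi\in\Pi}\bar J_\lambda^\pi$ evaluated at $\lambda=\lambda^*$, the Slater policy $\bar\pi$ is a feasible candidate, so
\begin{equation*}
D(\lambda^*)\;\le\;\bar J_c^{\bar\pi}+\lambda^*\bigl(\bar J_g^{\bar\pi}-\bar K\bigr).
\end{equation*}

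Then I would substitute the Slater inequality $\bar J_g^{\bar\pi}-\bar K\le -\xi$ (which rearranges Assumption~\ref{eqn:slatercond}) into the bracketed term. Since $\lambda^*\ge 0$, this gives
\begin{equation*}
\bar J_c^{\pi^*} \;=\; D(\lambda^*) \;\le\; \bar J_c^{\bar\pi}-\lambda^*\xi.
\end{equation*}
Rearranging yields $\lambda^*\xi\le \bar J_c^{\bar\pi}-\bar J_c^{\pi^*}$, and dividing by $\xi>0$ delivers the claimed bound.

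I do not expect any real obstacle here: the result is a textbook application of weak/strong duality combined with a strict-feasibility certificate, and all the ingredients (strong duality via Slater, the Slater gap $\xi$, the definition of $D$) are already in place in the excerpt. The only mild care needed is to be explicit that $\lambda^*\ge 0$ is used when multiplying through the Slater inequality, so that the direction of the inequality is preserved.
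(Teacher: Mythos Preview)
Your proposal is correct and follows the standard Slater-based argument; the paper itself does not spell out a proof but simply cites \cite[Lemma~1]{ding2020npg}, whose argument is exactly the one you outline. There is nothing to add.
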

The proof is similar to that of \cite[Lemma $1$]{ding2020npg}.


\begin{lem}\label{lem:weakconstrainbound}
Given assumption~\eqref{eqn:slatercond} holds true. For any $M \geq 2 \lambda^*$, if there exists a policy $\pi = \otimes \pi_n$, and $\delta > 0$ such that $\bar{J}_c^{\pi} - \bar{J}_c^{\pi^*} + M (\bar{J}_g^{\pi} - \bar{K})^+ \leq \delta$, then $(\bar{J}_g^{\pi} - \bar{K})^+ \leq \frac{2\delta}{M}$.
\end{lem}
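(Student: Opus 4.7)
The plan is to exploit convexity of the perturbation (value) function together with the bound on the optimal dual variable provided by Lemma~\ref{lem:boundoptimaldual}. Define the perturbation function
\begin{equation*}
v(\tau) \triangleq \min_{\pi = \otimes \pi_n} \{\bar{J}_c^{\pi} : \bar{J}_g^{\pi} - \bar{K} \leq \tau\},
\end{equation*}
so that $v(0) = \bar{J}_c^{\pi^*}$ and $v$ is convex and non-increasing in $\tau$. Strong duality (a consequence of Slater's condition, Assumption~\ref{eqn:slatercond}) implies that $-\lambda^*$ is a subgradient of $v$ at $\tau = 0$, which gives the linear lower bound
\begin{equation*}
v(\tau) \geq v(0) - \lambda^* \tau = \bar{J}_c^{\pi^*} - \lambda^* \tau, \qquad \forall \tau \geq 0.
\end{equation*}

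First I would set $\tau := (\bar{J}_g^{\pi} - \bar{K})^+$. By construction, $\pi$ is feasible for the perturbed problem with relaxation level $\tau$, so $\bar{J}_c^{\pi} \geq v(\tau)$. Combining with the subgradient inequality above yields
\begin{equation*}
\bar{J}_c^{\pi} - \bar{J}_c^{\pi^*} \geq -\lambda^* \tau.
\end{equation*}

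Next I would plug this into the hypothesis $\bar{J}_c^{\pi} - \bar{J}_c^{\pi^*} + M (\bar{J}_g^{\pi} - \bar{K})^+ \leq \delta$ to obtain
\begin{equation*}
-\lambda^* \tau + M \tau \leq \delta, \qquad \text{i.e.,}\qquad (M - \lambda^*) \tau \leq \delta.
\end{equation*}
Finally, using $M \geq 2\lambda^*$ gives $M - \lambda^* \geq M/2$, hence $\tau \leq 2\delta/M$, which is precisely the claim.

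The only subtle point (and likely the main obstacle in making the argument fully rigorous) is justifying the subgradient inequality $v(\tau) \geq v(0) - \lambda^*\tau$ in the CMDP setting, i.e., establishing that the perturbation function is convex and that $-\lambda^*$ is indeed one of its subgradients at $0$. This is standard once one lifts the problem to the space of occupation measures (on which both $\bar{J}_c$ and $\bar{J}_g$ are linear and the feasible set is convex); Slater's condition together with strong duality then delivers the subgradient property. Once this step is in place, the remainder is a two-line algebraic manipulation as above.
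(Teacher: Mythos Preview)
Your proposal is correct and is precisely the standard perturbation-function/subgradient argument that the paper defers to by citing \cite[Lemma~$2$]{ding2020npg}; the paper gives no independent proof, and the argument you outline (convexity of $v$, $-\lambda^*$ as a subgradient at $0$ via strong duality, then the two-line algebra using $M\ge 2\lambda^*$) is exactly that reference's proof adapted to the decomposed setting here.
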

The proof is similar to that of \cite[Lemma $2$]{ding2020npg}.

\textsc{Bounding the constraint violation in theorem~\eqref{thm:finiteconvthm}:}
The joint policy iterate at time $t$ is given by $\pi_t \triangleq \otimes \pi_{\theta_{n,t}}$. From the dual update in ~\eqref{eqn:pseudoNPGalgo2}, for any $\lambda \in \Lambda$.
\begin{align*}
    \abs{\lambda_{t+1} - \lambda}^2 &\leq \abs{\lambda_t + \eta_2(\bar{J}_g^{\pi_t} - \bar{K}) - \lambda}^2.\\
    &\leq  \abs{\lambda_{t} - \lambda}^2 +  2\eta_2(\bar{J}_g^{\pi_t} - \bar{K})(\lambda_{t} - \lambda) + \frac{\eta_2^2N^2}{(1-\gamma)^2}
\end{align*}
Averaging over iterations $t=0$ to $t=T-1$ yields,
\begin{align*}
0 &\leq \frac{1}{T} \abs{\lambda_{T} - \lambda}^2\\
&\leq \frac{1}{T} \abs{\lambda_{0} - \lambda}^2  + \frac{2\eta_2}{T} \sum_{t=0}^{T-1}(\bar{J}_g^{\pi_t} - \bar{K})(\lambda_{t} - \lambda) + \frac{\eta_2^2N^2}{(1-\gamma)^2}.\\
\end{align*}
Furthermore we have,
\begin{equation}\label{eqn:ineq1}
     \frac{1}{T} \sum_{t=0}^{T-1}(\bar{K} - \bar{J}_g^{\pi_t})  (\lambda_{t} - \lambda) \leq \frac{1}{2\eta_2T} \abs{\lambda_{0} - \lambda}^2 + \frac{\eta_2N^2}{2(1-\gamma)^2}.\\
\end{equation}
Since $\bar{J}_g^{\pi^*} \leq \bar{K}$, from inequality~\eqref{eqn:bdavgperfineql} it follows that
\begin{align*}
\frac{1}{T} \sum_{t=0}^{T-1} \bar{J}_c^{\pi_t} -  \bar{J}_c^{\pi^*} 
&\leq \frac{N\log |\cA|}{\eta_1 T} + \frac{N}{(1-\gamma)^2T} + \frac{\eta_2N^2}{(1-\gamma)^3}  \nonumber\\ 
&-\frac{1}{T} \sum_{t=0}^{T-1} \lambda_t (\bar{J}_g^{\pi_t} - \bar{K})\\
\frac{1}{T} \sum_{t=0}^{T-1} \bar{J}_c^{\pi_t} -  \bar{J}_c^{\pi^*} 
&+\frac{1}{T} \sum_{t=0}^{T-1} \lambda (\bar{J}_g^{\pi_t} - \bar{K})
\leq \frac{N\log |\cA|}{\eta_1 T}\nonumber\\
+ \frac{N}{(1-\gamma)^2T} &+ \frac{\eta_2N^2}{(1-\gamma)^3}  -\frac{1}{T} \sum_{t=0}^{T-1} (\lambda_t - \lambda) (\bar{J}_g^{\pi_t} - \bar{K})
\end{align*}

Combined with inequality~\eqref{eqn:ineq1}, we have
\begin{align}
&\frac{1}{T} \sum_{t=0}^{T-1} \bar{J}_c^{\pi_t} -  \bar{J}_c^{\pi^*} 
+\frac{1}{T} \sum_{t=0}^{T-1} \lambda (\bar{J}_g^{\pi_t} - \bar{K}) \nonumber\\
&\leq \frac{N\log |\cA|}{\eta_1 T} + \frac{N}{(1-\gamma)^2T} + \frac{\eta_2N^2}{(1-\gamma)^3} + \frac{1}{2\eta_2T} \abs{\lambda_{0} - \lambda}^2 \nonumber\\
&+ \frac{\eta_2N^2}{2(1-\gamma)^2}.
\end{align}

Note that there exists a policy $\pi''$ such that $\bar{J}_c^{\pi''} = \frac{1}{T} \sum_{t=0}^{T-1} \bar{J}_c^{\pi_t}$, $\bar{J}_g^{\pi''} = \frac{1}{T} \sum_{t=0}^{T-1} \bar{J}_g^{\pi_t}$. Using the fact that $\lambda^* \leq \frac{N}{(1-\gamma)\xi}$ from lemma~\eqref{lem:boundoptimaldual}, setting $\lambda = \frac{2N}{(1-\gamma)\xi} \mathbf{1}_{(\bar{J}_g^{\pi''} > \bar{K})}$, it follows from lemma~\eqref{lem:weakconstrainbound} that,

\begin{align*}
 &(\frac{1}{T} \sum_{t=0}^{T-1} \bar{J}_g^{\pi_t} - \bar{K})^+ = (\bar{J}_g^{\pi''} - \bar{K})^+ \\
&\leq \frac{\xi\log |\cA|}{\eta_1 T} + \frac{\xi}{(1-\gamma)T} + \frac{\eta_2\xi N}{(1-\gamma)^2} + \frac{2N}{\eta_2(1-\gamma)\xi T} \nonumber\\
&+ \frac{\eta_2\xi N}{2(1-\gamma)}. 
\end{align*}
\begin{multicols*}{2}
\raggedcolumns
\text{Substituting in the values for the step-sizes $\eta_1 = \log |\cA|$, and}\\
\text{ $\eta_2 = \frac{1-\gamma}{N\sqrt{T}}$, we have}
\begin{align*}
&(\frac{1}{T} \sum_{t=0}^{T-1} \bar{J}_g^{\pi_t} - \bar{K})^+ \leq \frac{\xi}{T} + \frac{\xi}{(1-\gamma)T} + \frac{\xi}{(1-\gamma) \sqrt{T}}\\
&+ \frac{2N^2}{(1-\gamma)^2\xi \sqrt{T}} + \frac{\xi}{2\sqrt{T}} \leq \frac{(2/\xi + 4\xi)N^2}{(1-\gamma)^2\sqrt{T}}.
\end{align*}
\hspace{80mm}
\qedsymbol
\end{multicols*}

\end{document}